\newcommand{\SymmetricPoissonTower}{\textsf{SymmetricPoissonTower}}
\title{Harmonic Decomposition in Data Sketches\thanks{This work was supported by NSF Grant CCF-2221980.}}
\author{Dingyu Wang\\wangdy@umich.edu\\University of Michigan}
\date{}
\theoremstyle{plain}
\newtheorem{theorem}{Theorem}
\newtheorem{lemma}{Lemma}
\newtheorem{corollary}{Corollary}
\newtheorem{conjecture}{Conjecture}
\theoremstyle{definition}
\newtheorem{definition}{Definition}
\theoremstyle{remark}
\newtheorem{remark}{Remark}
\DeclarePairedDelimiter{\floor}{\lfloor}{\rfloor}
\DeclarePairedDelimiter{\norm}{\|}{\|}
\newcommand{\ind}[1]{\mathbbm{1}\left[{#1}\right]}
\newcommand{\var}{\mathbb{V}}
\newcommand{\Cov}{\mathrm{Cov}}
\newcommand{\pr}{\mathbb{P}}
\newcommand{\E}{\mathbb{E}}
\newcommand{\Update}{\mathsf{Update}}
\newcommand{\supp}{\operatorname{supp}}
\newcommand{\poly}{\operatorname{poly}}
\newcommand{\Poisson}{\operatorname{Poisson}}
\newcommand{\SymmetricPoisson}{\operatorname{SymmetricPoisson}}
\newcommand{\R}{\mathbb{R}}
\newcommand{\Z}{\mathbb{Z}}
\newcommand{\C}{\mathbb{C}}
\newcommand{\N}{\mathbb{N}}
\renewcommand{\Re}{\mathrm{Re}}
\renewcommand{\Im}{\mathrm{Im}}
\begin{document}
\maketitle

\begin{abstract}
In the turnstile streaming model, a dynamic vector 
$\mathbf{x}=(\mathbf{x}_1,\ldots,\mathbf{x}_n)\in \Z^n$ is updated by a stream of entry-wise increments/decrements. 
Let $f\colon\Z\to \R_+$ be a symmetric function with $f(0)=0$. 
The \emph{$f$-moment} of $\mathbf{x}$ is defined to be

\[
f(\mathbf{x}) := \sum_{v\in[n]}f(\mathbf{x}_v).
\]

We revisit the problem of constructing a \emph{universal sketch} that can estimate many different $f$-moments. Previous constructions of universal sketches rely on the technique of sampling with respect to the $L_0$-mass (uniform samples) or $L_2$-mass ($L_2$-heavy-hitters), whose universality comes from being able to evaluate the function $f$ over the samples. To get samples, hash collisions are deliberately detected and avoided (with high probability), e.g.~singleton-detectors are used in $L_0$-sampling and the CountSketch is used in $L_2$-sampling. Such auxiliary data structures introduce significant overhead in space. Apart from this issue, sampling-based methods are shown to perform poorly for estimating certain ``nearly periodic functions'' where $\Omega(\poly(n))$ samples are needed.

\medskip 

In this paper, we propose a new universal sketching scheme that is almost ``dual'' to the sampling-based methods. Instead of evaluating $f$ on samples, we decompose $f$ into a linear combination of \emph{homomorphisms} $f_1,f_2,\ldots$ from $(\Z,+)$ to $(\C,\times)$, where the $f_k$-moments can be estimated regardless of hash collisions---\emph{because each $f_k$ is a homomorphism!} Then we synthesize the estimates of the $f_k$-moments to obtain an estimate of the 
$f$-moment.  Universality now comes from the fact
that we can weight the $f_k$-moments arbitrarily, 
where the correct weighting depends on 
the \emph{harmonic structure} of the function $f$. 

\medskip 

In contrast to the sampling-based methods, the new \SymmetricPoissonTower{} sketch takes the harmonic approach. It \emph{embraces} hash collisions instead of avoiding them, which saves multiple $\log n$ factors 
in space, e.g., when estimating all $L_p$-moments ($f(z) = |z|^p,p\in[0,2]$). For many nearly periodic functions, the \SymmetricPoissonTower{} is \emph{exponentially} more efficient than sampling-based methods. We conjecture that the \SymmetricPoissonTower{} is \emph{the} universal sketch that can estimate every tractable function $f$.

\end{abstract}

\section{Introduction}

In the turnstile streaming model (\cite{AlonMS99,cormode2003comparing}), 
a \emph{frequency vector} $\mathbf{x}=(\mathbf{x}_1,\ldots,\mathbf{x}_n)\in \Z^n$ is initialized as 
$0^n$ and updated through a stream of pairs $(v,\Delta)$, 
$v\in[n] = \{1,2,\ldots,n\}$ and $\Delta\in \Z$.
Upon receiving $(v,\Delta)$, $\mathbf{x}_v \gets \mathbf{x}_v + \Delta$.
We consider the problem of estimating the \emph{$f$-moment} $f(\mathbf{x})$, where $f\colon\Z\to\R_+$ is symmetric ($f(z)=f(-z)$) with $f(0)=0$, and
\begin{align*}
    f(\mathbf{x}) \coloneqq \sum_{v\in[n]}f(\mathbf{x}_v).\tag{$f$-moment of stream $\mathbf{x}$}
\end{align*}
Let $M$ be the entry-wise frequency bound.
We say an estimator $V$ \emph{$(1\pm\epsilon)$-approximates} the $f$-moment, if for 
any $\mathbf{x} \in [-M,M]^n$, 
\begin{align*}
    \pr\left((1-\epsilon)f(\mathbf{x})\leq |V-f(\mathbf{x})|\leq (1+\epsilon)f(\mathbf{x})\right) \geq 2/3.
\end{align*}
 To simplify the presentation, we assume the usual regime where $\epsilon^{-1}\leq O(\poly(n))$ and $M\leq O(\poly(n))$. We define two important notions---\emph{tractability} and \emph{universality}---regarding the $f$-moment estimation problem.
\begin{description}
    \item[Tractability of functions] We say a function $f$ is \emph{tractable} \cite{braverman2010zero} if for any fixed $\epsilon>0$, there exists a 
$(1\pm \epsilon)$-approximation algorithm of the $f$-moment 
using $O_\epsilon(\mathrm{polylog}(n))$ space. A central problem in the streaming literature is to characterize the class of \emph{tractable} functions \cite{AlonMS99,braverman2010zero,braverman2016streaming}.\footnote{In this work we use the tractability notion in \cite{braverman2010zero} where a function is tractable if it can be approximated with polylog-space. A slightly different definition is used in  \cite{braverman2016streaming} where functions are defined to be tractable if the space is sub-polynomial ($n^{o(1)}$). } 
    \item[Universality of sketches] We say a sketch is \emph{$\mathcal{C}$-universal}, where $\mathcal{C}$ is a class of functions, if for any $f\in \mathcal{C}$, a $(1\pm\epsilon)$-approximation of the $f$-moment can be returned by the sketch. It is highly desirable to find sketches that are $\mathcal{C}$-universal for as large a $\mathcal{C}$ as possible \cite{braverman2015universal,liu2016one}.
\end{description}


The \emph{ultimate} goal for $f$-moment sketching is thus to find a ``truly universal'' sketch that can $(1\pm\epsilon)$-approximate \emph{all tractable $f$-moments} with optimal log factors in the space complexity. A prerequisite of this goal is a complete characterization of tractable functions.

\medskip

The tractability of classic $L_p$-moments ($f$-moment with $f(z)=|z|^p$, $p\geq 0$)\footnote{We define $|z|^0 = \ind{z\neq 0}$.} is well-understood. On the one hand, $L_p$-moments with $p\in[0,2]$ can be $(1\pm \epsilon)$-approximated with $O(\epsilon^{-2}\log(n))$ bits of space \cite{AlonMS99,indyk2006stable,cormode2003comparing,KaneNW10}.\footnote{For $L_0$, only $O(\epsilon^{-2}\log n (\log \epsilon^{-1}+\log\log (n M)))$ bits are needed \cite{KaneNW10}.} On the other hand, Bar-Yossef, Jayram, Kumar, and Sivakumar~\cite{bar2004information} 
    proved that estimating the $L_{p}$-moment requires $\Omega(n^{1-2/p+o(1)})$ bits, which is $\Omega(\poly(n))$ space for any $p>2$. Therefore, $L_p$ is tractable if and only if $p\in[0,2]$. Nevertheless, classic $L_p$-sketches ($p\in[0,2]$) are all \emph{single-purposed} where the sketch design depends crucially on the parameter $p$. Thus despite being very efficient in space (only one log factor), classic $L_p$-sketches are extremely \emph{non-universal}.

To understand the tractability of other functions, the strategy has been to fully explore the class of all approximable functions by some ``powerful'' sketch $\mathcal{S}$ \cite{braverman2010zero,braverman2015universal,braverman2016streaming} and then hopefully prove any function that is not approximable by $\mathcal{S}$ cannot be approximated by any other polylog-space sketch either. Note that this strategy actually pursues tractability and universality at the same time, since once a function class $\mathcal{C}$ is proved to be approximable by the sketch $\mathcal{S}$, $\mathcal{S}$ is automatically a $\mathcal{C}$-universal sketch. We now discuss the existing techniques.

\paragraph{$L_0$-sampling.} A simple universal sketch is to sample $m$ elements $\mathbf{x}_{v_1},\ldots,\mathbf{x}_{v_m}$ uniformly from $\supp(\mathbf{x})$ and then estimate the $f$-moment by the empirical mean $V_f=\frac{\norm{\mathbf{x}}_0}{m}\sum_{j\in[m]}f(\mathbf{x}_{v_j})$. The sampling can be done 
with $L_0$-samplers (see~\cite{CormodeF14}, $O(\log^2n)$ space to get one $L_0$-sample) and the support size $\norm{\mathbf{x}}_0$ can be estimated by $L_0$-sketches~\cite{cormode2003comparing,KaneNW10}. 
This scheme is natural but the estimates are poor if $f$ varies a lot. Specifically, for functions $f$ such that $f(x)>0$ when $x\neq 0$, the number $m$ of samples  needs to be $\Omega(\epsilon^{-2}\max_{j\in[M]}f(j)/\min_{j\in [M]}f(j))$ for $V_f$ to be a $(1\pm \epsilon)$-approximation of the $f$-moment; see, Chestnut~\cite{chestnut2015stream}. 
For functions decreasing on $[1,M]$, it is proved in \cite{braverman2015universal} that all tractable moments can be estimated using this scheme. However, functions like $f(x)=|x|^p$ for $p\in(0,2]$ will take $\Omega(M^p)$ number of $L_0$-samples to approximate.


\paragraph{$L_2$-heavy hitters.} A more powerful sketch, used in~\cite{braverman2010zero,braverman2013generalizing,braverman2016streaming}, is to use variations on Indyk and Woodruff's~\cite{IndykW03} method of collecting 
$L_2$-heavy hitters using CountSketch \cite{CharikarCF04} at different subsampling levels,
and estimate $f$-moments based on the $f$-values of the ``recursive'' heavy hitters.
This technique is powerful enough to estimate all tractable $f$-moments, except for a class of \emph{nearly periodic} 
functions~\cite{braverman2016streaming}. 
One such function defined in \cite[\S 5]{braverman2016streaming} is $g_{np}\colon \Z \to \R_+$.\footnote{In other words, $\tau(x)$ is the least significant bit position in the binary representation of $x>0$.}
\begin{align}
g_{np}(0)=0 \;\mbox{ and }\; g_{np}(x)=2^{-\tau(|x|)}, \;\mbox{ where }\; \tau(x) = \max\{j\in \N : 2^j|x\}.      \label{eq:gnp}
\end{align}
For functions like $g_{np}$, the entries with large $g_{np}$-values are not necessarily $L_2$-heavy hitters, so sampling $L_2$-heavy hitters
are generally not helpful.
However, 
Braverman, Chestnut, Woodruff, and Yang~\cite[\S 5]{braverman2016streaming} gave an 
\emph{ad hoc} sketch for approximating $g_{np}$-moments 
with $O(\epsilon^{-8}\log^{14}n\log M)$ space, showing that $g_{np}$ is indeed tractable. A recent work by Pettie and Wang \cite{pettie2024sketching} further characterizes a class of tractable nearly periodic functions but their sketch is  single-purposed, with different $f$ requiring different sketches. 

\medskip
We summarize the two main drawbacks of existing universal sketches as follows.
\begin{description}
    \item[Drawback 1: Overhead for sampling] While a single $L_0$-sampler is quite space-efficient ($O(\log^2 n)$ bits), the class of approximable functions using polylog number of uniform samples is very limited (e.g., it fails to estimate any $L_p$-moment, $p>0$). The more powerful $L_2$-heavy hitter based 
sketch in \cite{braverman2016streaming} handles more function moments but it 
occupies siginificantly more space ($O(h(M)\epsilon^{-2}\log^6 n\log\log n)$ bits\footnote{$h$ is a sub-polynomial 
function so that for any $0<x<y$, $f(y)/f(x)\in[1/h(y),(y/x)^2 h(y)]$. See \cite{braverman2016streaming} for further details.}) as CountSketch is used on every subsampling level. The natural question is
\begin{center}
    \emph{Are such auxiliary sampling-related data structures necessary?}
\end{center}
    \item[Drawback 2: Fundamental mismatch with nearly periodic functions] The polylog-space overhead for sampling is non-ideal in practice, though it can be ignored when characterizing tractability. A more fundamental limitation is that, even if $L_0$ or $L_2$ samples are given for free, the sampling based method will use \emph{exponentially} more space when approximating certain nearly periodic functions, in comparison to the best possible sketch. Essentially, sampling based methods will need $\poly(n)$ samples if the function $f$ ever drops polynomially ($f(x)<f(1)/\poly(n)$ for some $x\in[-M,M]$). Even though most ``polynomially dropping'' functions are proved to be not tractable~\cite{braverman2015universal,braverman2016streaming}, Braverman Braverman, Chestnut, Woodruff, and Yang~\cite{braverman2016streaming} realize some of them can be estimated (e.g., $g_{np}$) using polylog-space by other non-sampling based methods. It is thus intriguing to ask
    \begin{center}
        \emph{Is there a universal sketch that handles tractable nearly periodic functions as well?}
    \end{center}
\end{description}

The discussion above suggests that one must depart from sampling based methods to remove the overhead for sampling-related subroutines, and to further improve the universality of sketches.

\paragraph{Our Contribution: A New \emph{Harmonic} Approach for Universal Sketching} Our main contribution is a whole new sketching scheme that does not use \emph{any} explicit samples but relies on the \emph{harmonic structure} of the target function $f$. Without any auxiliary data structure for sampling, the new  \SymmetricPoissonTower{} sketch is very space-efficient, occupying only $O(\epsilon^{-2}\log^2 n)$ bits of space. It can $(1\pm \epsilon)$-approximate all the natural functions including the common $L_p$-moments ($f(x)=|x|^p$, $p\in[0,2]$), logarithmic-moment ($f(x)=\log(1+|x|)$), and ``soft cap''-moment ($f(x)=1-e^{-|x|}$). Note that $L_0$-sampling based methods with matching space can only estimate $f$-moment with $\max_{j\in[M]}f(j)/\min_{j\in [M]}f(j)=O(1)$. Thus among the examples, the $L_0$-sampling method~\cite{braverman2015universal,chestnut2015stream} can only estimate the ``soft cap''-moment but fails to estimate $L_p$-moments and the logarithmic-moment.\footnote{Assuming $M=O(\poly(n))$, one needs $O(\epsilon^{-2}\log M)$ $L_0$-samples to estimate the logarithmic-moment. Thus the total space needed is $O(\epsilon^{-2}\log^3 n)$ which is one log factor more than the space used by the harmonic approach. } The powerful $L_2$-sampling framework~\cite{braverman2016streaming} 
can handle all the examples above, but it requires $\Omega(\epsilon^{-2}\log^6 n \log\log n)$ bits. 

More than saving polylog factors, 
the new harmonic approach also \emph{natively} tracks a large class of nearly periodic functions, including $g_{np}$, for which the sampling based methods provably need polynomially many samples. Thus the new approach also makes significant progress towards the ultimate goal of finding a ``truly universal'' sketching scheme. See \cref{fig:summary}.

Apart from the technical contributions, the new  \SymmetricPoissonTower{} sketch is also conceptually novel. In contrast to the previous methods which \emph{avoid} hash collisions,\footnote{For $L_0$-sampling, hash collisions refer to the usual notion of multiple elements hashed to one cell. For $L_2$-sampling, hash collisions refer to two or more elements with significant $L_2$-mass hashed into one cell (there can be many insignificant elements hashed to it which serve as ``background noise'' in  CountSketch \cite{CharikarCF04}). } the new harmonic approach \emph{embraces} hash collisions. 
The key new ingredient is an insight that enables us to infer information about $f$-moments from a cell with an \emph{unknown} number of indices hashed to it. 

\subsection{Insight: How to Estimate $f$-Moments under Hash Collisions?}\label{sec:insight}
Imagine hashing every element in $[n]$ to a cell, independently with probability $p=2^{-k}$. Let $J\subset[n]$ be the set of indices hashed to the cell 
and let $X = \sum_{v\in J} \mathbf{x}_v$ be stored in the cell. 
When $k$ and $\lambda=|\supp(\mathbf{x})|$ are large,
$|J|$ tends to a Poisson distribution.  For simplicity, we shall \emph{Poissonize} the cell\footnote{Poissoinzation is a standard technique in subsampling which simulates each update with a $\Poisson(1)$ number of updates. It simplifies the analysis without affecting either update time and space. See, e.g., \cite{chen2024space}.} where $|J|\sim \Poisson(p\lambda)$ indices $\{I_j\}_{1\leq j\leq |J|}$ are selected uniformly at random from $\supp(\mathbf{x})$ \emph{with replacement}, so $X$ is precisely
\begin{align}
    X = \sum_{j=1}^{\mathrm{Poisson}(p\lambda)} \mathbf{x}_{I_j}, \label{eq:X-poisson}
\end{align}
where the $I_j$s are i.i.d.~copies of $I\sim \mathrm{Uniform}(\supp(\mathbf{x}))$. 
The goal is to estimate the $f$-moment $f(\mathbf{x})=\sum_{v\in[n]}f(\mathbf{x}_v)$ from the observation $X$. 
We begin with the trivial but important observation that when $f$ is \emph{linear}, $X$ betrays lots of useful information about $f(\mathbf{x})$.  In particular,
\begin{align}
    \E f(X) &= \E f\left(\sum_{j=1}^{\mathrm{Poisson}(p\lambda)} \mathbf{x}_{I_j}\right)  & \text{By \cref{eq:X-poisson}}\nonumber\\
    &= \E \sum_{j=1}^{\mathrm{Poisson}(p\lambda)} f\left( \mathbf{x}_{I_j}\right) & \text{From the linearity of $f$}\tag{$\star$}\\
    &= p\lambda \E f(\mathbf{x}_I) & \text{Linearity of expectation}\nonumber\\
    &= p f(\mathbf{x}) & \text{$f(\mathbf{x})=\lambda\E f(\mathbf{x}_I) $} \nonumber
\end{align}
Thus $f(X)/p$ is an unbiased estimator for $\lambda \E f(\mathbf{x}_I)$.

\paragraph{The key insight.}
In the example above it is not important that $f$ 
is linear \emph{per se}, but that
it is a \emph{homomorphism} (allowing the step $(\star)$), in this case from 
$(\Z,+)$ to $(\C,+)$.  
Moreover, other homomorphisms are equally useful.
For example, suppose that $f$ is instead a homomorphism from $(\Z,+)$ to $(\C,\times)$, the \emph{multiplicative} group of 
complex numbers, i.e., for any $x,y\in \Z$, $f(x+y) = f(x)f(y)$.
We can still infer information about $\E f(\mathbf{x}_I)$ from $f(X)$, even when $X$ is the sum of more than one coordinate of $\mathbf{x}$, since $f(X)$ is just the product of $f(\mathbf{x}_{I_j})$s. We compute $\E f(X)$ as follows.
\begin{align}
    \E f(X) &= \E f\left(\sum_{j=1}^{\mathrm{Poisson}(p\lambda)} \mathbf{x}_{I_j}\right) & \text{By \cref{eq:X-poisson}}\nonumber\\
    &= \E \prod_{j=1}^{\mathrm{Poisson}(p\lambda)} f\left( \mathbf{x}_{I_j}\right) & \text{Since $f:(\Z,+)\to (\C,\times)$ is a homomorphism}\tag{$\star\star$}\\
    &= e^{-p\lambda}\sum_{k=0}^\infty \frac{(p\lambda)^k}{k!} \E \prod_{j=1}^{k} f(\mathbf{x}_{I_j}) & \text{Definition of $\mathrm{Poisson}(p\lambda)$}\nonumber\\
    &=e^{-p\lambda}\sum_{k=0}^\infty \frac{(p\lambda)^k}{k!} (\E f(\mathbf{x}_{I}))^k & \text{Since the $\mathbf{x}_{I_j}$s are i.i.d.}\nonumber\\
    &=e^{p\lambda (\E f(\mathbf{x}_{I})-1)} & \text{Taylor expansion of $\exp(p\lambda\E f(\mathbf{x}_I))$}\nonumber\\
    &=e^{pf(\mathbf{x})} & \text{$f(\mathbf{x})=\lambda\E (f(\mathbf{x}_I)-1) $}\nonumber
\end{align}
Note that since a multiplicative homomorphism maps $0\in(\Z,+)$ to $1\in (\C,\times)$, we must calibrate $f(z)$ to $f^\dagger(z)=f(z)-1$ when computing the $f$-moment so that $f^\dagger(0)=0$. Therefore we have $f(\mathbf{x}) \coloneqq \sum_{v\in[n]} (f(\mathbf{x}_v)-1)=\lambda \E (f(\mathbf{x}_I)-1)$ in the last step.\footnote{In general, one may define the $f$-moment to be $\sum_{v\in[n]} (f(\mathbf{x}_v)-f(0))$, which agrees with the classic definition $\sum_{v\in[n]} f(\mathbf{x}_v)$ when $f(0)=0$. }  
Step ($\star\star$) allows us to extract useful information from $f(X)$ about $f(\mathbf{x})$.

Now suppose that $f$ is some function that can be decomposed into a linear combination of homomorphisms $f_1,f_2,f_3,\ldots$ from $(\Z,+)$ to $(\C,\times)$ after calibration, i.e.,  $f(z)=\sum_{j}(f_j(z)-1)$.
We can estimate each $f_j$-moment separately for each $j$ and thereby obtain an estimate of the $f$-moment by combining the estimates of the components.
See Figure \ref{fig:diagram} for a visualization of the estimation process.

\begin{figure}[!ht]
    \centering
    \includegraphics[width=0.8\textwidth]{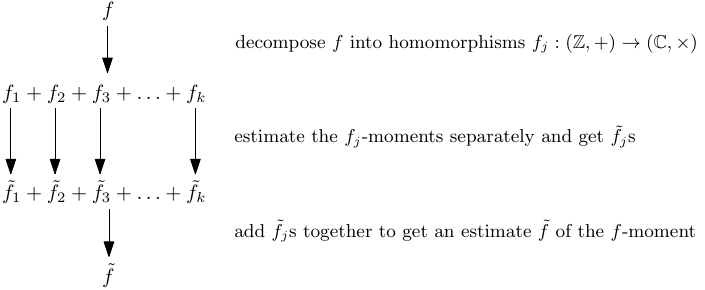}
    \caption{Diagram of the $f$-moment estimation process. Decompose $f$ as $f_1+\ldots+f_k$ where $f_j$s are homomorphisms. For $j=1,\ldots,k$, compute $\tilde{f}_j$ that estimates the $f_j$-moment of $\mathbf{x}$.  Then $\tilde{f}=\tilde{f}_1+\ldots+\tilde{f}_k$ is the estimator for the $f$-moment.
    }
    \label{fig:diagram}
\end{figure}

There is a famous tool to decompose functions into a linear combination of homomorphisms, namely the \emph{Fourier transform}. Nevertheless, the Fourier transform requires $f$ to be integrable which does not cover common functions like $f(x)=|x|^p$, $p\in(0,2)$. A few more tricks are needed to construct the final sketch though, this decomposition-into-homomorphisms trick is the key to overcome the barriers that held 
back previous sampling based methods. 

\subsection{Organization}
In \cref{sec:technicalintro}, we give a technical overview with our formal results in \cref{sec:new-results}. 
In \cref{sec:choose_t}, we introduce two additional tricks in the construction and estimation of the main sketch \SymmetricPoissonTower{}.
In \cref{sec:harmonic_ana}, we formally define the \SymmetricPoissonTower{} sketch with a full analysis of the estimation for individual harmonic moments. We analyze the combined estimators in \cref{sec:ana_comb}. We assemble the proof of the main theorem in \cref{sec:main_proof}.

\section{Technical Introduction \& Related Work}\label{sec:technicalintro}

\subsection{(Symmetric) Harmonic Moments}\label{sec:harmonic_intro}
We now formalize the estimation procedure. Note that the set of bounded homomorphisms from $(\Z,+)$ to $(\C,\times)$ are nothing secret but just $\{g_\gamma(z)=e^{i\gamma z}:\gamma\in [0,2\pi)\}$.\footnote{Technically, $g_r\colon z\mapsto e^{rz}$ is also a multiplicative homomorphism parameterized by $r\in \R$. However, we will not use $g_r$ since we only consider symmetric functions in this work. Nevertheless, $g_r$ might be useful when considering incremental-only streams. } The scheme in \cref{fig:diagram} decomposes $f(z)$ to $\int_0^{2\pi}(e^{i\gamma z }-1)\,\mu(d\gamma)$ where $\mu$ is some signed measure.  By convention \cite{braverman2010zero,braverman2016streaming}, we will focus on functions that are symmetric and non-negative. Thus for now, we will only consider functions $f$ that can be decomposed to  $\int_0^{2\pi}(1-\cos(\gamma z ))\,\nu(d\gamma)$ where $\nu$ is a \emph{positive} measure.
\begin{definition}[(symmetric) harmonic moments]\label{def:harmonic_moment}
For any $\gamma>0$, define $f_\gamma(x)=1-\cos(\gamma x)$. The \emph{$\gamma$-harmonic moment} of a stream with frequency vector $\mathbf{x}\in \Z^n$ is
\begin{align*}
    f_\gamma(\mathbf{x}) &=\sum_{v\in[n]} f_\gamma(\mathbf{x}_v) = \sum_{v\in[n]}(1-\cos(\gamma \mathbf{x}_v)).
\end{align*}
In other words, the $\gamma$-harmonic moment is the $f_\gamma$-moment.
\end{definition}

The computation around ($\star\star$) shows that the $e^{i\gamma z}$-moment can be estimated by Poissonized subsampling. Since $\cos(\gamma z)=(e^{i\gamma z}+e^{-i\gamma z})/2$, the symmetric harmonic moments can be estimated similarly with  \emph{symmetric Poisson random variables}.

\begin{definition}[symmetric Poisson distribution]\label{def:sym_poisson}
A symmetric Poisson random variable with rate $\lambda$ distributes as the difference of two independent Poisson random variables with rate $\lambda/2$. This distribution is denoted $\SymmetricPoisson(\lambda)$. 
\end{definition}
Consider the following 1-cell linear sketch, parameterized by a \emph{subsampling-level} $p>0$.
\begin{align*}
    R_p &= \sum_{v\in[n]}\mathbf{x}_v Y_v,
\end{align*}
where $Y_v \sim \SymmetricPoisson(p)$ are i.i.d.~random variables, provided by a random oracle.\footnote{Throughout the paper we work in the \emph{random oracle} model, 
in which we can evaluate uniformly random hash functions 
$H : [n]\to [0,1]$ and $Y_v$ can be simulated through the randomness in $H(v)$. The hash function is \emph{not} stored by the sketch but provided externally.} 
The following lemma symmetrizes the computation around ($\star\star$).

\begin{lemma}\label{lem:observation}
For any $\gamma >0$, 
\begin{align*}
    \E e^{i\gamma R_p}=e^{-pf_\gamma(\mathbf{x})}.
\end{align*}
\end{lemma}
\begin{proof}
By the computation around ($\star\star$), for $Y\sim \SymmetricPoisson(p)$ and $Z,Z'\sim \Poisson(p/2)$, we have
\begin{align}
    \E e^{i\gamma Y} &= \E e^{i\gamma (Z-Z')} = \E e^{i\gamma Z} \E e^{-i\gamma Z'}= e^{-\frac{p}{2}(1-e^{i\gamma})}e^{-\frac{p}{2}(1-e^{-i\gamma})}= e^{-p(1-\cos(\gamma))}. \label{eq:symmetric_poisson}
\end{align}
We can now relate $R_p$ to the $f_\gamma$-moment as follows.
\begin{align*}
    \E e^{i\gamma R_p} &= \E e^{i\gamma \sum_{v\in[n]}\mathbf{x}_v Y_v}\\
    &=\prod_{v\in[n]} \E e^{i\gamma \mathbf{x}_v Y_v} \tag{$Y_v$s are independent}\\
    &=\prod_{v\in[n]} e^{-t(1-\cos(\gamma \mathbf{x}_v))}\tag{$Y_v\sim \SymmetricPoisson(t)$ and \cref{eq:symmetric_poisson}}\\
    &= e^{-t\sum_{v\in[n]}(1-\cos(\gamma \mathbf{x}_v))}=e^{-pf_\gamma(\mathbf{x})}. \tag{\cref{def:harmonic_moment}}
\end{align*}
\end{proof}
With the identity in \cref{lem:observation}, one may estimate
the $\gamma$-harmonic moment $f_\gamma(\mathbf{x})$ by choosing a suitable level $p=\Theta(1/f_\gamma(\mathbf{x}))$. As the ``angular rate'' $\gamma$ and the input $\mathbf{x}$ vary, the size of $f_\gamma(\mathbf{x})$ can vary and therefore we need to maintain $R_p$ for $O(\log n)$ geometrically spaced levels $p$. We thus can $(1\pm\epsilon)$-approximate the $f_\gamma$-moments for all $\gamma>0$ with $O(\epsilon^{-2}\log n)$ cells.\footnote{Since the input $\mathbf{x}$ is integer-valued, $f_\gamma (\mathbf{x}) = f_{\gamma + 2k\pi}(\mathbf{x})$ for any $k\in \Z$. We assume $f_\gamma(\mathbf{x})=\Omega(1)$ here for simplicity.}  Further details are in \cref{sec:harmonic_ana}. 

To show how this approach already significantly expands the universality, we prove that, surprisingly, for \emph{almost all} $f_\gamma$-moments, sampling based methods \cite{braverman2015universal,braverman2016streaming} need polynomial space (exponentially worse than the harmonic approach above!)  because $1-\cos(\gamma x)$ can occasionally drop to $O(1/\poly(x))$.
\begin{lemma}\label{lem:why_fail}
Let $\gamma\in\R_+$ such that $\gamma/\pi\not\in \mathbb{Q}$ and
let $f_\gamma(x)=1-\cos(\gamma x)$. Then
\begin{align*}
    \forall k\in\Z_+, \exists w_k \in\{1,\ldots,k\}, f_\gamma(w_k) = O(1/k^2) \tag{$f_\gamma$ drops polynomially}
\end{align*}
\end{lemma}
\begin{remark}
    Recall that a function $f$ drops polynomially if $0<\min_{z\in [M]}f(z)\leq O(1/\poly(M))$ as $M\to\infty$.
\end{remark}
\begin{proof}
We use the following classic pigeonhole argument. Let $\beta = \gamma/(2\pi)$.
  For any $k\geq 1$, by the pigeonhole principle, there exist distinct $a,b\in \{1,2,3,\ldots,k+1\}$, such that $|(a\beta-\floor{a\beta})-(b\beta -\floor{b\beta})|\leq 1/k$. Therefore, 
  \begin{align*}
      f_{\gamma}(a-b)&=1-\cos((a-b)\gamma)=1-\cos((a\beta-b\beta)2\pi)=1-\cos(((a\beta-\floor{a\beta})-(b\beta -\floor{b\beta}))2\pi)\\
      &\leq 2\pi^2((a\beta-\floor{a\beta})-(b\beta -\floor{b\beta}))^2\leq 2\pi^2/k^2,
  \end{align*}
  where we used the fact that $1-\cos(z)\leq z^2/2$ for any $z\in \R$. Define $w_k=|a-b|\in\{1,\ldots,k\}$ and we have $f_\gamma(w_k)\leq 2\pi^2/k^2$.
\end{proof}

In addition, the harmonic approach immediately gives an \emph{explanation} why the function $g_{np}(x)$  in \cite[\S 5]{braverman2016streaming} (defined in \cref{eq:gnp}) occasionally drops to $O(1/x)$ but is still tractable by some ad hoc sketch: the $g_{np}$-moment can be decomposed into harmonic moments. 
\begin{align*}
    g_{np}(x) &= \frac{4}{3}\sum_{\gamma \in \mathbb{B}}(1-\cos(2\pi \gamma x))2^{-2\tau_*(\gamma)}\tag{the nearly periodic function example}
\end{align*}
The set $\mathbb{B}$ is the set of real numbers in $[0,1)$ with finite binary representation. 
The function $\tau_*(\gamma)=\min\{j\in\N\mid 2^{-j}|\gamma\}$ returns the length of the binary representation of $\gamma$. See \cref{sec:decomp_gnp} for a justification of the decomposition of $g_{np}$. Given the decomposition, $g_{np}$ can be automatically estimated by combining the estimates of harmonic moments and this approach is much more space-efficient than the ad hoc sketch in \cite{braverman2016streaming}. We will discuss why the estimates can be safely combined without blowing up the error in \cref{sec:combine}.

\medskip

So far, it may seem that the harmonic approach is only useful in the theoretical pursuit of fully characterizing tractability. However, it turns out that the information scattered in the harmonic moments can be combined to answer meaningful queries ($L_p$, logarithm, etc.), making this approach an improved \emph{practical} sketch \emph{without any overhead for sampling}. We now list the decompositions for  common functions. See \cref{fig:harmonic_decomposition} for visualization.
\begin{align*}
    |x|^p &= \int_0^\infty (1-\cos(\gamma x ))\frac{1}{(-\Gamma(-p))\cos(p\pi/2)\gamma^{1+p}} \,d\gamma \tag{$L_p$-moment, $p\in(0,1)\cup(1,2)$}\\ 
    |x| &= \int_0^\infty (1-\cos(\gamma x ))\frac{2}{\pi\gamma^{2}} \,d\gamma \tag{$L_1$-moment}\\
    \ind{x\neq 0} &= \int_0^\pi (1-\cos(\gamma x ))\frac{1}{\pi} \,d\gamma \tag{$L_0$-moment/support size, $x\in\Z$}\\
    1-e^{-r |x|} &= \int_0^\infty (1-\cos(\gamma x ) )\frac{2 r}{\pi(\gamma^2+r^2)}\,d\gamma \tag{symmetric ``soft cap'' moment}
\end{align*}

As seen above, all $L_p$-moments ($p\in[0,2)$) are within the span of the harmonic moments. The symmetric version of Cohen's basis functions \cite{cohen2017hyperloglog,cohen2018stream} $G_r^*(x)=1-e^{-r|x|}$ are in the span of the harmonic moments, as well as their linear combinations.  For example,
\begin{align}    
\log(1+|x|)&=\int_0^\infty (1-e^{-r|x|})e^{-r}r^{-1}\,dr\nonumber\\
&= \int_0^\infty \left[ \int_0^\infty (1-\cos(\gamma x) )\frac{2 r}{\pi(\gamma^2+r^2)}\,d\gamma\right] e^{-r}r^{-1}\,dr\nonumber\\
&= \int_0^\infty (1-\cos(\gamma x))  \left[ \int_0^\infty \frac{2 e^{-r}}{\pi(\gamma^2+r^2)}\,dr\right] \,d\gamma\tag{logarithmic-moment}\nonumber 
\end{align}

The $L_2$-moment, on the other hand, lies on the ``border'' of the harmonic moments, since $|x|^2 = \lim_{\gamma\to 0} \frac{2}{\gamma^2}(1-\cos(\gamma x))$.  \cref{sec:L2} explains how to estimate the $L_2$-moment with the \SymmetricPoissonTower.

\begin{figure}[!p]
    \centering
    \begin{subfigure}{0.4\linewidth}
    \includegraphics[width=\textwidth]{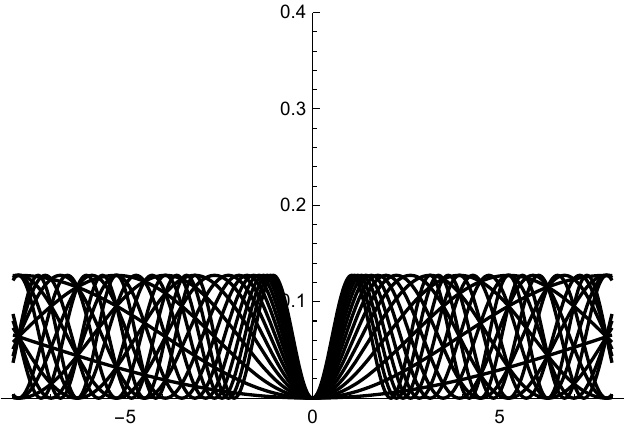}
    \caption{Harmonic decomposition of $L_0$}
    \end{subfigure}
    \begin{subfigure}{0.4\linewidth}
    \includegraphics[width=\textwidth]{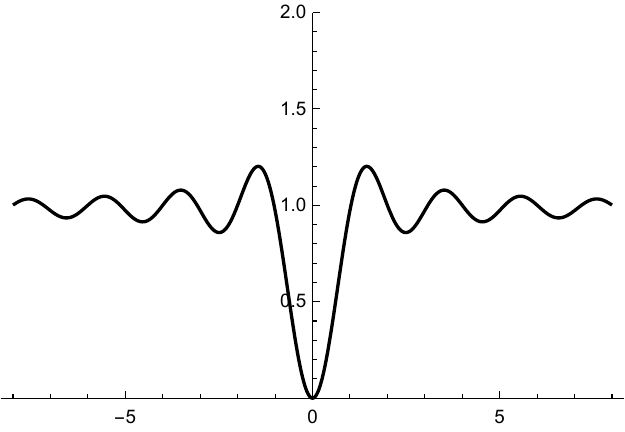}
    \caption{Sum back $f(z)=\ind{z\neq 0}$, $z\in\Z$}
    \end{subfigure}
    \begin{subfigure}{0.4\linewidth}
    \includegraphics[width=\textwidth]{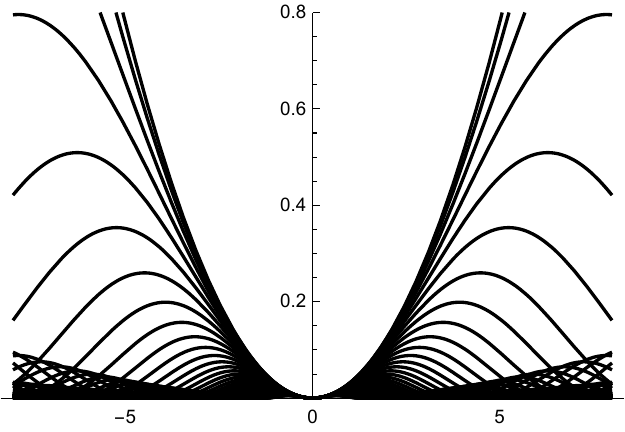}
    \caption{Harmonic decomposition of $L_1$}
    \end{subfigure}
    \begin{subfigure}{0.4\linewidth}
    \includegraphics[width=\textwidth]{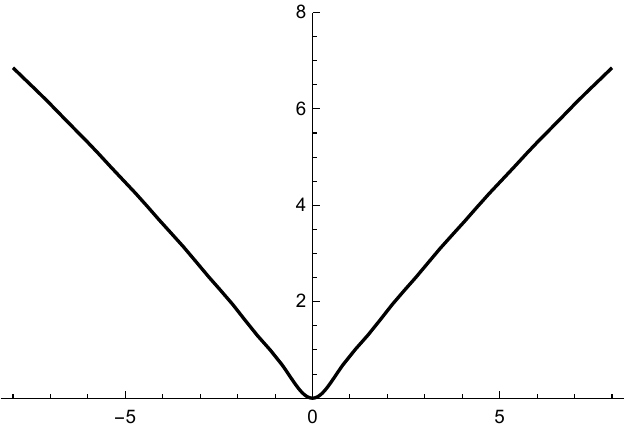}
    \caption{Sum back to $f(z)=|z|$, $z\in\Z$}
    \end{subfigure}
    \begin{subfigure}{0.4\linewidth}
    \includegraphics[width=\textwidth]{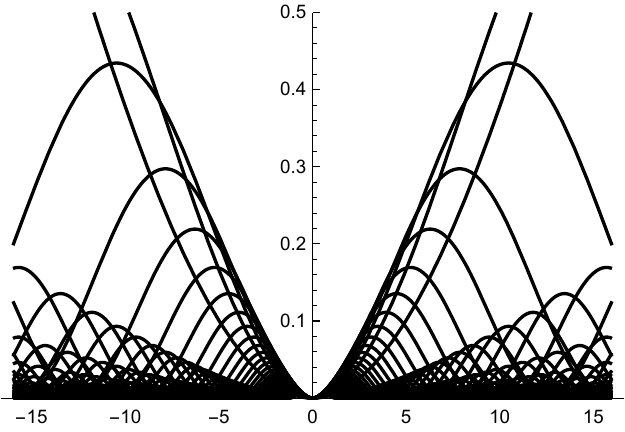}
    \caption{Harmonic decomposition of logarithm}
    \end{subfigure}
    \begin{subfigure}{0.4\linewidth}
    \includegraphics[width=\textwidth]{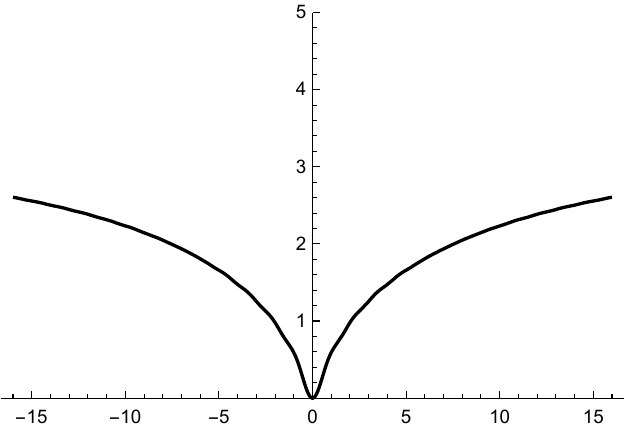}
    \caption{Sum back to $f(z)=\log(1+|z|)$, $z\in\Z$}
    \end{subfigure}
    \begin{subfigure}{0.4\linewidth}
    \includegraphics[width=\textwidth]{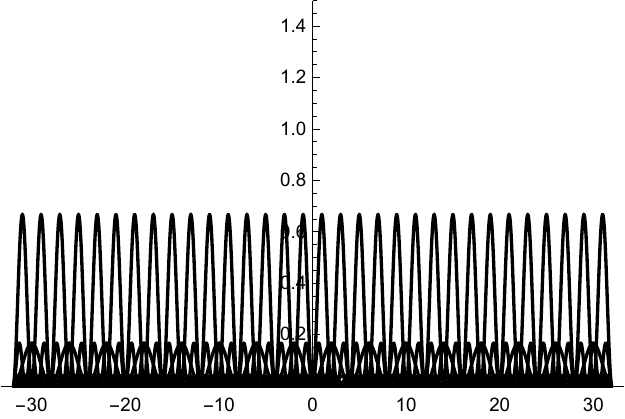}
    \caption{Harmonic decomposition of $g_{np}$}
    \end{subfigure}
    \begin{subfigure}{0.4\linewidth}
    \includegraphics[width=\textwidth]{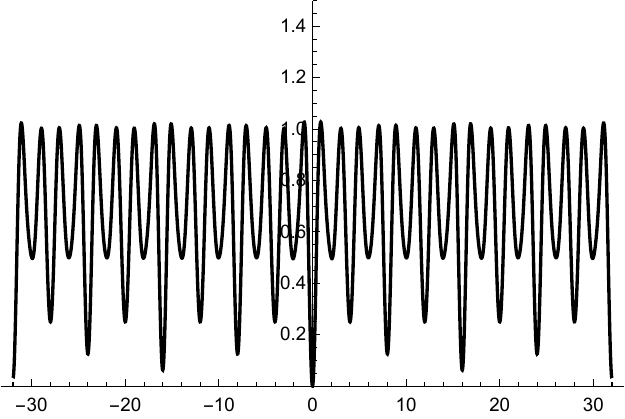}
    \caption{Sum back to $g_{np}(z)=2^{-\tau(|z|)}$, $z\in\Z$}
    \end{subfigure}
    \caption{Diagram of harmonic decomposition: Different function moments can be estimated by combining estimates of harmonic moments with different weights. Only a subset of harmonic components are used for visualization and they only sum back to an approximation of $f$. As more harmonic components are used, the sum will uniformly converge to $f$ over $[-M,M]$. }
    \label{fig:harmonic_decomposition}
\end{figure}

\subsection{Combination of Harmonic Moments}\label{sec:combine}
We now discuss the error of the combined estimator.
Let $\{V_\gamma\}_{\gamma\in \R_+}$ be a family of estimates where $V_\gamma$ is an approximation of the $f_\gamma$-moment. Let $f(x)=\int_0^\infty f_\gamma(x)\,\nu(d\gamma)$ where $\nu$ is some positive measure. We define the \emph{combined estimator} of $f$ as $\int_0^\infty V_\gamma \,\nu(d\gamma)$.

At first sight, it might look like the combined estimator can have a large error, since the estimates $\{V_\gamma\}_{\gamma\in \R_+}$ are \emph{arbitrarily correlated}. They are produced with a \emph{single} universal sketch! However, Cohen \cite{cohen2017hyperloglog,cohen2018stream} observed that as long as the \emph{mean} and \emph{variance} of each individual estimate $V_\gamma$ have good guarantees, such guarantees can be transported to the combined estimator. In \cite{cohen2017hyperloglog,cohen2018stream}, Cohen developed a universal sketch over \emph{incremental streams} with a min-based sampling scheme.\footnote{Min-based schemes are not applicable over turnstile streams.} In contrast to the turnstile model, incremental streams have a frequency vector $\mathbf{x}$ in $\N^n$ and only positive updates are allowed. For any $r>0$, let $G_r(x)=1-e^{-rx}$. 
Cohen's sketch has a collection of estimators $\{V_r^+\}_{r>0}$ 
such that for any $r>0$,
\begin{align}
    \E V_r^+ &= G_r(\mathbf{x}) = \sum_{v\in[n]} (1-e^{-r\mathbf{x}_v}) \label{eq:cohen_mean}\\
    \var V_r^+ &= O(\epsilon^{2}) G_r(\mathbf{x})^2. \label{eq:cohen_var}
\end{align}
For any positive measure $\nu$ on $(0,\infty)$, let $G_{\nu}(x)=\int_0^\infty G_r(x) \,\nu(dr)$.\footnote{For $G_{\nu}$ to exist, the measure $\nu$ has to satisfy that $\int_0^\infty \min(r,1)\,\nu(dr)<\infty$.} Cohen noted \cite{cohen2017hyperloglog,cohen2018stream} that
\begin{align*}
\E \int_0^\infty V_r^+ \,\nu(dr) &=  \int_0^\infty \E  V_r^+\,\nu(dr) = \sum_{v\in[n]} G_\nu(\mathbf{x}_v) = G_\nu(\mathbf{x}) \tag{\cref{eq:cohen_mean}
}\\
\var \int_0^\infty V_r^+ \,\nu(dr) &=  \int_0^\infty \int_0^\infty \Cov(V_r^+,V_{r'}^+)\,\nu(dr)\nu(dr')\\
&\leq \int_0^\infty \int_0^\infty \sqrt{\var V_r^+}\sqrt{\var V_{r'}^+}\,\nu(dr)\nu(dr') \tag{Cauchy-Schwartz} \\
&= \left(\int_0^\infty \sqrt{ \var  V_r^+} \,\nu(dr)\right)^2 \\
&= \left(\int_0^\infty O(\epsilon) G_r(\mathbf{x}) \,\nu(dr)\right)^2 \tag{\cref{eq:cohen_var}}\\
&= O(\epsilon^{2}) G_\nu(\mathbf{x})^2.
\end{align*}
Therefore, by Chebyshev's inequality, $\int_0^\infty V_r^+ \,\nu(dr) $ is a $(1\pm \epsilon)$-approximation of the $G_\nu$-moment $G_\nu(\mathbf{x})$.  In this work we will construct estimators whose mean and variance can be analyzed so that we can apply Cohen's argument to our harmonic framework.

\subsection{Formal Statement of New Results}\label{sec:new-results}

We now formally state the main theorem. We first define the function family 
\[
\mathcal{F}_*=\{f(x)=cx^2+\int_0^\infty (1-\cos(\gamma x))\,\nu(d\gamma)\mid c\in\R_+, \text{$\nu$ is a positive measure}\},
\]
which is the \emph{mathematical} span of the harmonic moments as well as $x^2$. The simple take-away message is that 
the universal harmonic sketch is \emph{almost} $\mathcal{F}_*$-universal: It can $(1\pm \epsilon)$-approximate any $f$-moment with $f\in \mathcal{F}_*$, conditioning on $f(\mathbf{x})\in [1/\poly(n), \poly(n)]$. Nevertheless, to be fully rigorous, we need a few more technical conditions on $f$ that guarantee  $f(\mathbf{x})\in [1/\poly(n), \poly(n)]$ for any $\mathbf{x}\in [-M,M]^n$.

For any $f\in \mathcal{F}_*$, define $C_f = \int_0^\infty \min(\gamma^2,1)\,\nu(d\gamma)$. Note that  for any $x\in\R$, $f(x)$ is finite if and only if $C_f<\infty$. Recall $M\in \Z_+$ is the frequency bound. Define $\xi(f,M)=\min_{x\in[M]}\{f(x)\mid f(x)> 0\}$.

\begin{theorem}[Universal Harmonic Sketch]\label{thm:main}
There exists an $O(\epsilon^{-2}\log^2 (n\epsilon^{-1}M))$-bit space sketch (\SymmetricPoissonTower) 
such that for any turnstile stream with frequency vector $\mathbf{x}\in \{-M,\ldots,M\}^n$, the following statements are true.
\begin{description}
    \item[Combination] For any function $f\in\mathcal{F}_*$ such that $C_f/\xi(f,M)=O(\poly(nM\epsilon^{-1}))$, a $(1\pm \epsilon)$-approximation of the $f$-moment can be returned with probability 2/3.
    \item[Signed combination] For any functions $f,g\in\mathcal{F}_*$ such that $f(x)-g(x)=\Omega(f(x)+g(x))$ for $x\in [M]$ and $(C_f+C_g)/\xi(f-g,M)=O(\poly(nM\epsilon^{-1}))$, a $(1\pm \epsilon)$-approximation of the $(f-g)$-moment can be returned with probability 2/3.  
\end{description}
\end{theorem}
\begin{remark}
    In the usual region where $\epsilon^{-1} =O(\poly(n))$ and $M=O(\poly(n))$, the space is $O(\epsilon^{-2}\log^2n)$. See \cref{fig:summary} for a summary.
\end{remark}
\begin{remark}\label{rem:boundary}
To further extend the universality, one may set $\tilde{h}(M)=(\max_{x\in [M]} \frac{f(x)+g(x)}{f(x)-g(x)})^2$. For a non-negative function $f^*=f-g$, the $f^*$-moment can be $(1\pm \epsilon)$-approximated by the signed combination with $O(\tilde{h}(M)\cdot\epsilon^{-2} \log^2n)$ bits of space.
\end{remark}

We now discuss the parameters in the conditions of the main theorem.
\begin{description}
    \item[Function constant $C_f$.] For any \emph{fixed} function $f\in \mathcal{F}_*$, $C_f$ is a constant. For example, one may compute that for the $L_p$-moments where $f(x)=|x|^p$, the constants $C_{|x|^p}\leq 2$ for any $p\in(0,2)$ and for the $L_0$-moment, $C_{\ind{x\neq 0}}<1$. For the $g_{np}$ function,  $C_{g_{np}}\leq \frac{2}{3}$; see \cref{sec:decomp_gnp}. 
    However, given that the sketch is universal, 
    when a \emph{family} of functions $f$ are considered, the constraint $C_f=O(\poly(nM\epsilon^{-1}))$ may matter.
    \item[Drop rate $\xi(f,M)$.] For any \emph{increasing} function $f\in \mathcal{F}_*$, $\xi(f,M)=f(1)$ is a constant. The constraint $\xi(f,M)=\Omega(1/\poly(nM\epsilon^{-1}))$  only matters when $f$ can become very small over the frequency range $\{1,\ldots,M\}$. Note that this condition in the main theorem is much weaker than the corresponding drop rate condition in previous techniques \cite{braverman2010zero,braverman2015universal,braverman2016streaming} which will fail whenever $\xi(f,M)$ is not polylog. In our main theorem, $\xi(f,M)$ can decrease with any polynomial rate. This allows the harmonic sketch to overcome the previous barrier to approximate $g_{np}$ for which $\xi(g_{np},M)=\Theta(1/M)$ and the golden harmonic function $g_{gold}$ (\cref{lem:gold}, below) for which $\xi(g_{gold},M)=\Theta(1/M^2)$. 
\end{description}

Does the harmonic approach completely dominate the $L_2$-heavy hitter method \cite{braverman2016streaming} in terms of universality? This is a difficult statement to check as the criteria in \cite{braverman2016streaming} is a set of properties (slow-jumping, slow-dropping, and predictable) coming from a deep analysis of the Indyk-Woodruff subsample-and-CountSketch technique \cite{IndykW03}.  On the other hand, the new harmonic approach sets criteria on the harmonic structure of the functions. However, we do know the following.
\begin{itemize}
    \item Common target functions like $L_p$-moments ($f(x)=|x|^p$, $p\in[0,2]$), logarithmic-moment ($f(x)=\log(1+|x|)$), and ``soft cap''-moment ($f(x)=1-e^{-|x|}$) all have simple harmonic structures and thus can be tracked efficiently with the \SymmetricPoissonTower{} sketch. Thus the \SymmetricPoissonTower{} is a provably more space efficient universal sketch for the common tasks above, removing all sampling-related data structures.
    \item There exist functions that can be tracked with polylog space using the harmonic approach, while sampling based approaches need polynomial space. One example is the $g_{np}$ function from \cite{braverman2016streaming},
which has an \emph{ad hoc} solution, but is covered by 
our main \cref{thm:main} using a much smaller space. For a new example, we also prove that the \emph{golden harmonic moment} is 
not slow-dropping~\cite{braverman2016streaming},
 but covered by \cref{thm:main}. 
See \cref{lem:gold} below (the proof is in \cref{sec:gold}).
\end{itemize}

In all, a provably ``truly universal'' sketch has not been found yet, but the harmonic approach is a significant step towards the final goal.

\begin{lemma}[golden harmonic moment]\label{lem:gold}
Let $g_{gold}(x) = 1-\cos(\frac{1+\sqrt{5}}{2}\cdot 2\pi x)$ for $x\in \Z$. Then
\begin{itemize}
    \item $g_{gold}$ is not slow-dropping since $g_{gold}(1)>1.7$ and $\xi(g_{gold},M)=O(1/M^2)$. 
    \item $g_{gold}$ is covered by \cref{thm:main} 
    since $C_{g_{gold}}=1$ and $\xi(g_{gold},M) =\Omega(1/M^2)$. 
\end{itemize}
\end{lemma}

\begin{figure}[!p]
    \centering    \includegraphics{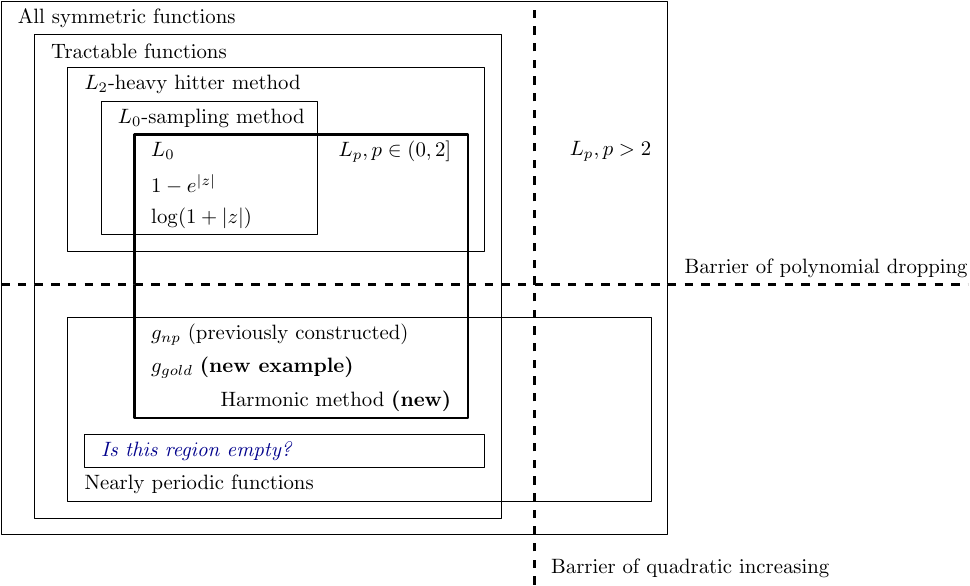}\\
    \vspace{0.2in}
    \begin{tabular}{|c|c|}
    \hline
  sketch  & space  \\
\hline
$L_0$-sampling based \cite{braverman2015universal}  &  $O(h^*(M)\cdot\epsilon^{-2}\log^2 n)$  \\
$L_2$-heavy hitter based \cite{braverman2016streaming} & $O(h(M)\cdot \epsilon^{-2}\log^6 n\log\log n)$ \\
ad hoc sketch for $g_{np}$ \cite{braverman2016streaming} & $O(\epsilon^{-8}\log^{15} n)$ \\
harmonic approach \textbf{(new)} &  $O(\tilde{h}(M)\cdot\epsilon^{-2}\log^2 n)$ \\
\hline
\end{tabular}
    \caption{Summary of tractability and universality. Pivotal functions are shown to demonstrate the estimation power of different universal sketching schemes.  See \cite{braverman2010zero,braverman2016streaming} for $L_2$-heavy hitter based methods ($h$ satisfies $f(y)/f(x)\in[1/h(y),(y/x)^2 h(y)]$ for any $0<x<y\leq M$) and \cite{braverman2015universal,chestnut2015stream} for $L_0$-sampling based methods ($h^*(n)\geq \max_{j\in[M]}f(j)/\min_{j\in [M]}f(j)$). The functions $h$ and $h^*$ describe the growing overhead to approximate harder and harder $f$-moments (``hard'' with respect to the corresponding scheme). The harmonic approach has a different ``boundary behavior'' with overhead $\tilde{h}$ which is described in \cref{rem:boundary}.   The $g_{np}$ (defined in \cref{eq:gnp}) is constructed in \cite{braverman2016streaming} to show the existence of tractable functions with occasional polynomial dropping ($g_{np}(z)=1/z$ if $z$ is a power of $2$). $g_{gold}$ is newly constructed in \cref{lem:gold} which occasionally drops quadratically ($g_{gold}(z)=O(1/z^2)$ for infintely many $z$).   It is proved in \cite{bar2004information} that $L_p$ is not tractable if $p>2$.  }
    \label{fig:summary}
\end{figure}

\subsection{Related Work}
B{\l}asiok, Braverman, Chestnut, Krauthgamer, and Yang \cite{blasiok2017streaming} further extend the Indyk-Woodruff \cite{IndykW03} heavy hitter technique to estimate \emph{symmetric norms} of the dynamic vector $\mathbf{x}$. Such heavy hitter based methods are proved to be optimal for all symmetric norms up to polylog factors. However, they face the same barrier as \cite{braverman2016streaming} when trying to estimate generic $f$-moments that do not correspond to norms. 

\medskip

The idea of ``solving'' or ``inverting'' hash collisions is not entirely new. Especially in the incremental streaming model, Kumar, Sung, Xu, and Wang \cite{kumar2004data} have used the \emph{Expectation Maximization} technique to iteratively estimate $\phi_1,\phi_2,\ldots$ where $\phi_j$ is the number of elements with value $j$. A more recent work by Chen, Indyk, and Woodruff \cite{chen2024space} estimates $\phi_1,\phi_2,\ldots$ iteratively with dynamic programming. However, both \cite{kumar2004data} and \cite{chen2024space} crucially  rely  on the incremental structure, so, e.g., if a ``0'' is observed, it must be an empty cell; if a ``1'' is observed, it has to be a singleton with value 1; if a ``2'' is observed, it is either the sum of two ones or a singleton with value two; and so on. It is not clear whether such techniques \cite{kumar2004data,chen2024space} can be generalized to the turnstile model since if elements have both positive and negative values, then, e.g., a ``0'' cell can be either empty, $(1) + (-1)$, or $(2) + (-3) + (1)$, etc.---there are infinitely many combinations of integer-values to collide into a zero. From this perspective, our insight in \cref{sec:insight} is really powerful and potentially provides a new simple scheme for incremental streams as well (with multiplicative homomorphism $f_r^*\colon z\mapsto e^{-rz}$ which is bounded now since the argument $z$ is non-negative in the incremental model). 

\medskip

As discussed in \cref{sec:combine}, the idea of combining estimates of ``basis moments'' to estimate functions that are in the ``span'' has been first used by Cohen \cite{cohen2017hyperloglog,cohen2018stream} in the incremental setting. Nevertheless, the context is very different---Cohen uses min-based, multi-objective samplers to estimate the basis moments, whereas we consider the more generic turnstile streams (with both insertion and deletion)  and our basis are the harmonic moments.

\medskip

Though we focus on the polylog-space regime in this work, there have been efforts on proving optimal space in the polynomial regime as well. In particular, upper bounds and lower bounds for large frequency moments $(L_p,p>2)$ are intensively studied in \cite{AlonMS99,braverman2014optimal,indyk2005optimal,chakrabarti2003near,price2012applications,ganguly2004estimating,coppersmith2004improved,braverman2013approximating}.

\section{Implicit Level Selection and Smoothed Subsampling}\label{sec:choose_t}

The take-away message from~\cref{sec:harmonic_intro} is that if one 
maintains the linear sketch  $R_p = \sum_{v\in[n]}\mathbf{x}_v Y_v$,
where $Y_v\sim \SymmetricPoisson(p)$, 
then for any $\gamma >0$, 
\begin{align*}
\E e^{i\gamma R_p} =e^{-pf_\gamma(\mathbf{x})}.\tag{\cref{lem:observation}}
\end{align*}
By selecting $p=\Theta(1/f_\gamma(\mathbf{x}))$, the $\gamma$-harmonic moment $f_\gamma(\mathbf{x})$ can be $(1\pm \epsilon)$-approximated with $O(\epsilon^{-2})$ i.i.d.~copies of $R_p$. 

To minimize the space complexity, we will select the suitable level $p$ \emph{implicitly}.\footnote{Kane, Nelson, and Woodruff faced a similar \emph{level selection} problem in estimating distinct elements~\cite{KaneNW10}. 
Their approach would be to choose $t$ from a \emph{rough estimator} of $f_\gamma(\mathbf{x})$, which would return a large constant factor
approximate in the range $[c_1f_\gamma(\mathbf{x}), c_2f_\gamma(\mathbf{x})]$ with probability $1-\delta$. 
However, here the \emph{variance} is complicated to bound with such explicit level selectors. 
A rough bound for the increase in variance for failed selections is $O(n^2)$ (note that $f_\gamma(\mathbf{x})=O(n)$), which  will require 
$\delta = O(1/\poly(n))$ to suppress. This will  add an additional $\log (n)$ factor to the algorithm.} 
Instead of using auxiliary data structures to select a proper level $p$, we maintain levels\footnote{While it is more common in the streaming literature \cite{flajolet1985probabilistic} to maintain $m$ i.i.d.~cells at levels $2^{-k}$, $k\in\Z$, here we sample a \emph{single} cell at \emph{finer} levels $e^{-k/m}$, $k\in\Z$. This trick is known as \emph{smoothing} in \cite{pettie2021information}, which is used for removing the oscillating component in the estimators.} $R_p$ with $p=e^{-k/m}$, $k\in\Z$, and aggregate them as follows.
\begin{align*}
    U_\gamma &= \sum_{k\in \Z} (1-e^{i\gamma R_{e^{-k/m}}}) e^{\frac{1}{3}k/m},
\end{align*}
whose mean is\footnote{Later on we will establish that $U_\gamma$ is absolutely convergent almost surely and the summation and expectation can be interchanged.} 
\begin{align*}
    \E U_\gamma &= \sum_{k\in \Z} \E (1-e^{i\gamma R_{e^{-k/m}}})  e^{\frac{1}{3}k/m}\\
    &=\sum_{k\in \Z}(1-e^{-e^{-k/m}f_\gamma(\mathbf{x})})  e^{\frac{1}{3}k/m} \tag{\cref{lem:observation}}\\
    &=f_\gamma(\mathbf{x})^{1/3}\sum_{k\in \Z}(1-e^{-e^{-(k-m\log f_\gamma(\mathbf{x}))/m}})  e^{\frac{1}{3}(k-m\log f_\gamma(\mathbf{x}))/m}\\
    &\approx f_\gamma(\mathbf{x})^{1/3}\sum_{k\in \Z}(1-e^{-e^{-k/m}})e^{\frac{1}{3}k/m} \tag{Shift $k\gets \floor{k-m\log f_\gamma(\mathbf{x})}$}
\end{align*}
Roughly speaking we have $\E U_\gamma\propto f_\gamma(\mathbf{x})^{1/3}$. Let $U_\gamma^{(1)},U_\gamma^{(2)},U_\gamma^{(3)}$ be three i.i.d.~copies of $U_\gamma$. We know $U_\gamma^{(1)}U_\gamma^{(2)}U_\gamma^{(3)}$ 
will be a roughly unbiased estimator for $f_\gamma(\mathbf{x})$ 
after normalization. 
The reason that this sum implicitly chooses the correct levels is because the contribution to the mean decays exponentially outside 
the suitable levels. 
\begin{itemize}
    \item When $k\gg \log_2 f_\gamma(\mathbf{x})$, $\left(1-e^{-2^{-k}f_\gamma(\mathbf{x})}\right) 2^{k/3}\approx 2^{-k}f_\gamma(\mathbf{x}) 2^{k/3}=f_\gamma(\mathbf{x}) 2^{-2k/3}$. Thus the contribution to the mean $\E U_\gamma$ is vanishing exponentially as $k\to\infty$.
    \item When $k\ll  \log_2 f_\gamma(\mathbf{x})$, $\left(1-e^{-2^{-k}f_\gamma(\mathbf{x})}\right) 2^{k/3}\approx  2^{k/3}$. Thus the contribution to the mean is also vanishing exponentially as $k\to -\infty$.
\end{itemize}

The construction of $U_\gamma$ is similar to Pettie and Wang's~\cite{wang2023better} recent ``$\tau$-GRA'' 
cardinality estimators for \textsf{HyperLogLog} and \textsf{PCSA},
which aggregate statistics at each level $2^{-k}$ with weight $2^{-\tau k}$, called \emph{$\tau$-aggregation}. 
$U_\gamma$ can be considered as a $(-1/3)$-aggregation in \cite{wang2023better}'s framework.

\begin{remark}
It may seem unnatural to first construct an estimator $U_\gamma$ for $f_\gamma(\mathbf{x})^{1/3}$ and then  estimate $f_\gamma(\mathbf{x})$ with three i.i.d.~copies of $U_\gamma$. 
It would be simpler to set $U_\gamma' = \sum_{k\in \Z} (1-e^{i\gamma R_{e^{-k/m}}}) e^{k/m}$ or  $U_\gamma'' = \sum_{k\in \Z} (1-e^{i\gamma R_{e^{-k/m}}}) e^{\frac{1}{2}k/m}$, where by the same line of argument, one gets $\E U_\gamma'\propto f_\gamma(\mathbf{x})$ and $\E U_\gamma''\propto f_\gamma(\mathbf{x})^{1/2}$. However, we note that $\E U_\gamma'$ does not, in fact, exist, and $\E (U_\gamma'')^2$ also does not exist. Thus, for the estimator to have finite mean and variance, three copies are needed in this approach.
\end{remark}

\section{The \SymmetricPoissonTower{} Sketch}\label{sec:harmonic_ana}
Recall that a $\SymmetricPoisson(\lambda)$ random variable distributes as the difference of two independent Poisson random variables with rate $\lambda/2$. We first list a few more properties of symmetric Poissons.
\begin{lemma}\label{lem:symp_proper}
Let $Y\sim \SymmetricPoisson(\lambda)$.
\begin{itemize}
    \item $\E Y = 0$, $\E Y^2=\var Y = \lambda$, $\E Y^4=\lambda+3\lambda^2$, $\var Y^2 = \lambda +2\lambda^2$.
    \item $\E e^{izY} = e^{-\lambda (1-\cos(z))}$ for any $z\in \R$.
    \item $\pr(Y=0)=e^{-\lambda/2}$.
\end{itemize}
\end{lemma}

Following the discussion in \cref{sec:harmonic_intro,sec:choose_t}, we now formally define 
the \SymmetricPoissonTower{} sketch.

\begin{definition}
An \emph{abstract \SymmetricPoissonTower{}} is parameterized by a 
single integer $m$, which controls the variance of the estimates, 
and consists of an infinite vector $(X_j)_{j\in \Z}$ of \emph{cells}, initialized as $0^{\Z}$. For any $j\in \Z$,
\begin{align*}
    X_j = \sum_{v\in [n]}\mathbf{x}_v Z_{v,j},
\end{align*}
where $Z_{v,j} \sim \SymmetricPoisson(e^{-j/m})$. 
All $Z_{v,j}$s are independent and assigned by the random oracle. 
In other words, $\Update(v,y)$ is implemented as follows.
\begin{description}
    \item[$\Update(v,y)$:] 
    For each $j\in \Z$, set $X_j\gets X_j + Z_{v,j}y$, where $Z_{v,j}$s are independent $\mathrm{Poisson}(e^{-j/m})$ random variables.
\end{description}
The \emph{$[a,b)$-truncated \SymmetricPoissonTower} 
only stores $(X_j)_{j\in[a,b)}$.
\end{definition}

We will first analyze this abstract (infinite) \SymmetricPoissonTower{} and then prove that it suffices to store an 
$[a,b)$-truncated \SymmetricPoissonTower{} 
with $b-a = O(m\log(nmM))$ cells, 
at a cost of $O(1/\poly(nmM))$ to the bias of the estimates. See \cref{tab:notation} for notations.

\begin{table}[]
    \centering
    \begin{tabular}{c|c|c|c}
        notation & definition & name & note \\
        \hline
       $\norm{\mathbf{x}}_0$  & $\sum_{v\in[n]}\ind{\mathbf{x}_v\neq 0}$ &  support size & $\mathbf{x}\in \Z^n$ \\
       $\norm{\mathbf{x}}_p$  & $(\sum_{v\in[n]}|\mathbf{x}_v|^p)^{1/p}$ & $p$-norm & $\mathbf{x}\in \Z^n$, $p\in(0,\infty)$\\      
       $\norm{\mathbf{x}}_\infty$  & $\max_{v\in[n]}|\mathbf{x}_v|$ & maximum norm & $\mathbf{x}\in \Z^n$\\ 
       $f_\gamma(\mathbf{x})$  & $\sum_{v\in[n]}(1-\cos(\gamma \mathbf{x}_v))$ & $\gamma$-harmonic moment & $\mathbf{x}\in \Z^n,\gamma\in\R_+$\\ 
       $\var Z$  & $\E((Z-\E Z)\overline{(Z-\E Z)})$& variance & complex random variable $Z$\\       
    \end{tabular}
    \caption{Notations}
    \label{tab:notation}
\end{table}

\subsection{$(-1/3)$-Aggregation}
As discussed in \cref{sec:choose_t}, we will use the following statistic that selects suitable levels implicitly.
\begin{definition}[$(-1/3)$-aggregation]
For any $\gamma\in \R_+$, define
\begin{align*}
    U_{\gamma} = \sum_{k=-\infty}^\infty (1-e^{i\gamma X_k})e^{\frac{1}{3}k/m}.
\end{align*}
\end{definition}

To invoke Fubini's theorem when computing $\E U_\gamma$ and $\E U_\gamma^2$, we first prove the following lemma.
\begin{lemma}\label{lem:fubini}
For any $\gamma \in \R_+$ and $m\in\Z_+$,
\begin{itemize}
    \item $\sum_{k=-\infty}^\infty \E |1-e^{i\gamma X_k}|e^{\frac{1}{3}k/m}\leq 14m\norm{\mathbf{x}}_0^{1/3},$
    \item $\sum_{k=-\infty}^\infty \E |1-e^{i\gamma X_k}|^2e^{\frac{2}{3}k/m} \leq 44m\norm{\mathbf{x}}_0^{2/3}$.
\end{itemize}
\end{lemma}
\begin{proof}
Recall that $Z_{v,k}\sim \SymmetricPoisson(e^{-k/m})$ and thus by \cref{lem:symp_proper}, $\pr(Z_{v,k}=0)=e^{-e^{-k/m}/2}$.
Since $X_k = \sum_{v=1}^n Z_{v,k}\mathbf{x}_v$,
\begin{align*}
    \pr(X_k=0) \geq \pr(\forall v\in \supp(\mathbf{x}), Z_{v,k}=0) = e^{-e^{-k/m}\norm{\mathbf{x}}_0/2}.
\end{align*}
Note that $|1-e^{i\gamma X_k}|\leq 2$. Thus $\E |1-e^{i\gamma X_k}| \leq 2\pr(X_k\neq 0)\leq 2(1-e^{-e^{-k/m}\norm{\mathbf{x}}_0/2})$ and  $\E |1-e^{i\gamma X_k}|^2 \leq 4\pr(X_k\neq 0)\leq 4(1-e^{-e^{-k/m}\norm{\mathbf{x}}_0/2})$. Now we compute
\begin{align*}
    &\sum_{k=-\infty}^\infty \E |1-e^{i\gamma X_k}|e^{\frac{1}{3}k/m} \\
    &\leq 2 \sum_{k=-\infty}^\infty (1-e^{-e^{-k/m}\norm{\mathbf{x}}_0/2})e^{\frac{1}{3}k/m}\\
    &=2\sum_{k=-\infty}^\infty\eta_1(k/m;a,b,c) \intertext{where $\eta_1$ is defined in \cref{lem:eta} and $a=0,b=\norm{\mathbf{x}}_0/2,c=1/3$}
    &\leq 2m\int_{-\infty}^\infty \eta_1(x;a,b,c) \,dx + 2\left|m\int_{-\infty}^\infty \eta_1(x;a,b,c) -\sum_{k=-\infty}^\infty\eta_1(k/m;a,b,c)  \right|\\
    &\leq 2(-\Gamma(-1/3))m(\norm{\mathbf{x}}_0/2)^{1/3} + 2(3+\Gamma(2/3))(\norm{\mathbf{x}}_0/2)^{1/3}\tag{\cref{lem:eta}(\ref{item:lem:eta1},\ref{item:lem:eta6})}\\
    &\leq 14m\norm{\mathbf{x}}_0^{1/3}.
\end{align*}
Along the same lines, we have 
\begin{align*}
    &\sum_{k=-\infty}^\infty \E |1-e^{i\gamma X_k}|^2e^{\frac{1}{3}k/m} \\
    &\leq 4\sum_{k=-\infty}^\infty\eta_1(k/m;a,b,c) \intertext{where $a=0,b=\norm{\mathbf{x}}_0/2,c=2/3$}
    &\leq 4(-\Gamma(-2/3))m(\norm{\mathbf{x}}_0/2)^{2/3} + 4(6+\Gamma(1/3))(\norm{\mathbf{x}}_0/2)^{2/3}\tag{\cref{lem:eta}(\ref{item:lem:eta1},\ref{item:lem:eta6})}\\
    &\leq 44m\norm{\mathbf{x}}_0^{2/3}.
\end{align*}
\end{proof}

We now compute the mean and variance of $U_{\gamma}$. Note that we will use the Gamma function $\Gamma(x)$, which is negative over $(-1,0)$ and positive over $(0,1)$. See \cref{fig:gamma_function}. 

\begin{figure}
    \centering
    \includegraphics[width=0.5\linewidth]{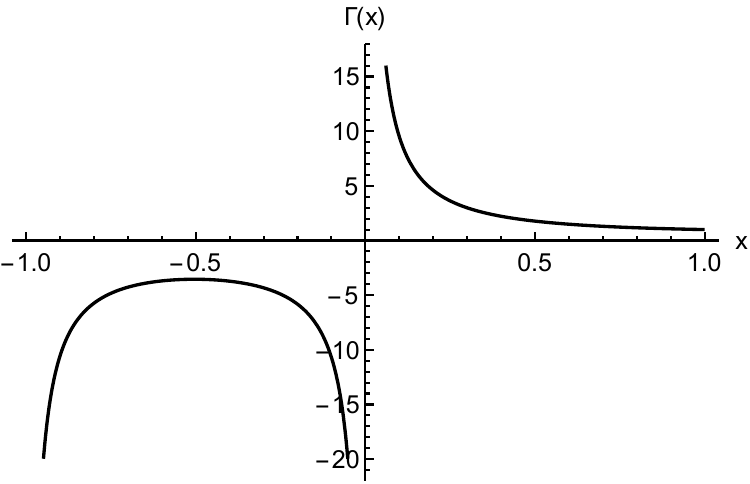}
    \caption{Gamma function $\Gamma(x)$ over $(-1,0)\cup (0,1)$}
    \label{fig:gamma_function}
\end{figure}

\begin{lemma}[mean of $U_\gamma$]\label{lem:E_u}
For any $\gamma \in \R_+$,
\begin{align*}
m^{-1}\E U_{\gamma} = 
(-\Gamma(-1/3) + O(m^{-1}))f_\gamma(\mathbf{x})^{1/3},
\end{align*}
as $m\to \infty$.
\end{lemma}
\begin{proof}
By \cref{lem:fubini}, we invoke Fubini's theorem and exchange the order of expectation and sum.
\begin{align*}
    \E U_\gamma &= \sum_{k=-\infty}^\infty (1- \E e^{i\gamma X_k})e^{\frac{1}{3}k/m}
    \intertext{ By \cref{lem:observation},}
    &= \sum_{k=-\infty}^\infty (1- e^{-e^{-k/m} f_\gamma(\mathbf{x})})e^{\frac{1}{3}k/m}\\
    &= \sum_{k=-\infty}^\infty \eta_1(k/m;a,b,c),
\end{align*}
where $a=0,b=f_\gamma(\mathbf{x}),c=1/3$. By \cref{lem:eta}, we know
\begin{align*}
    \left|\E U_\gamma - mf_\gamma(\mathbf{x})^{1/3}(-\Gamma(-1/3))\right| &\leq (3+\Gamma(2/3))f_\gamma(\mathbf{x})^{1/3}.
\end{align*}
\end{proof}

\begin{lemma}[variance of $U_\gamma$]\label{lem:V_u}
For any $\gamma\in \R_+$,
\begin{align*}
    m^{-1}\var U_\gamma = 2^{2/3}(-\Gamma(-2/3))(1+O(m^{-1}))f_\gamma(\mathbf{x})^{2/3},
\end{align*}
as $m\to \infty$.
\end{lemma}
\begin{proof}
We first check the integrability.
\begin{align*}
    &\sum_{k=-\infty}^\infty\sum_{j=-\infty}^\infty \E |1-e^{i\gamma X_k}||1-e^{i\gamma X_j}|e^{\frac{1}{3}(k+j)/m}\\
    &\leq \sum_{k=-\infty}^\infty \E |1-e^{i\gamma X_k}|^2e^{\frac{2}{3}k/m}+(\sum_{k=-\infty}^\infty \E |1-e^{i\gamma X_k}|e^{\frac{1}{3}k/m})^2\\
    &\leq 240 m^2 \norm{\mathbf{x}}_0^{2/3}<\infty \tag{by \cref{lem:fubini}}.
\end{align*}
Then, by Fubini's theorem, we have
\begin{align*}
    \var U_\gamma = \E U_\gamma\overline{U_\gamma} -\E U_\gamma \E \overline{U_\gamma}  &= \sum_{k=-\infty}^\infty\sum_{j=-\infty}^\infty \E (1-e^{i\gamma X_k})(1-e^{-i\gamma X_j})e^{\frac{1}{3}(k+j)/m}\\
    &\quad -\sum_{k=-\infty}^\infty\sum_{j=-\infty}^\infty \E (1-e^{i\gamma X_k})\E (1-e^{-i\gamma X_j})e^{\frac{1}{3}(k+j)/m}\\
    &= \sum_{k=-\infty}^\infty (1-|\E e^{i\gamma X_k}|^2) e^{\frac{2}{3}k/m}\\
    &= \sum_{k=-\infty}^\infty (1-e^{-2e^{-k/m}[\mathrm{x};\gamma]}) e^{\frac{2}{3}k/m} \\
    &= \eta_1(k/m; 0, 2f_\gamma(\mathbf{x}),2/3).
\end{align*}
 By \cref{lem:eta}, we know
 \begin{align*}
\left|\var U_\gamma-m (2f_\gamma(\mathbf{x}))^{2/3}(-\Gamma(-2/3))\right|\leq (6+\Gamma(1/3))(2f_\gamma(\mathbf{x}))^{2/3}
 \end{align*}
\end{proof}

\subsection{Triple \SymmetricPoissonTower}

As in \cref{sec:choose_t}, the final trick is to store three independent copies of the \SymmetricPoissonTower{} and obtain three i.i.d.~copies of $U_{\gamma}$, denoted as $U_{\gamma}^{(1)}$, $U_{\gamma}^{(2)}$, $U_{\gamma}^{(3)}$.  Define $V_{\gamma}$ as
\begin{align*}
    V_{\gamma} =  (-\Gamma(-1/3))^{-3}m^{-3}U_{\gamma}^{(1)}U_{\gamma}^{(2)}U_{\gamma}^{(3)}.
\end{align*}

\begin{theorem}[triple \SymmetricPoissonTower]\label{thm:abstract}
Let $X=(X_k^{(j)})_{k\in \Z,j=1,2,3}$ be a triple \SymmetricPoissonTower. For any $\gamma \in \R_+$, let
\begin{align*}
    V_\gamma &= (-\Gamma(-1/3))^{-3}m^{-3} \prod_{j=1}^3\left(\sum_{k=-\infty}^\infty (1-e^{i\gamma X_k^{(j)}})e^{\frac{1}{3}k/m}\right)
\end{align*}
Then
\begin{align*}
    \E V_\gamma &= (1+O(m^{-1}))f_\gamma(\mathbf{x})\\
    \var V_\gamma &= (m^{-1}+O(m^{-2}))\cdot \frac{3\cdot 2^{2/3}(-\Gamma(-2/3))}{(-\Gamma(-1/3))^2} \cdot f_\gamma(\mathbf{x})^{2}.
\end{align*}
\end{theorem}
\begin{proof}
Recall that $ (-\Gamma(-1/3))^{3}V_{\gamma} = m^{-3}U_{\gamma}^{(1)}U_{\gamma}^{(2)}U_{\gamma}^{(3)}$, where $U^{(j)_\gamma}$s are i.i.d.~copies of $U_\gamma$. We first compute the mean. By \cref{lem:E_u},
    \begin{align*}
        &\E V_\gamma= \left((-\Gamma(-1/3))^{-1}m^{-1}\E U_\gamma\right)^3 \\
        &=\left((1+O(m^{-1})) f_\gamma(\mathbf{x})^{1/3}\right)^3\\
        &=  (1+O(m^{-1})) f_\gamma(\mathbf{x}).
    \end{align*}
We now compute the variance. 
    \begin{align*}
       (-\Gamma(-1/3))^6 \E V_\gamma\overline{V_\gamma} &=\left(m^{-2}\E U_\gamma\overline{U_\gamma}\right)^3\\
       &= \left(m^{-2}\var U_\gamma + m^{-2}\E U_\gamma\E \overline{U_\gamma}\right)^3\\       
      (-\Gamma(-1/3))^6  \E V_\gamma\E\overline{V_\gamma}&=\left(m^{-2}\E U_\gamma\E\overline{U_\gamma}\right)^3\\
       (-\Gamma(-1/3))^6 \var V_\gamma &= \left(m^{-2}\var U_\gamma + m^{-2}\E U_\gamma\E \overline{U_\gamma}\right)^3- \left(m^{-2}\E U_\gamma\E\overline{U_\gamma}\right)^3\\
        &= m^{-3}\left(m^{-1}\var U_\gamma\right)^3+3m^{-2}\left(m^{-1}\var U_\gamma\right)^2\left(m^{-2}\E U_\gamma\E \overline{U_\gamma}\right)\\
        &\quad+3m^{-1}\left(m^{-1}\var U\right)\left(m^{-2}\E U_\gamma\E \overline{U_\gamma}\right)^2.
        \intertext{Note that by \cref{lem:V_u}, $m^{-1}\var U_\gamma =(2^{2/3}(-\Gamma(-2/3))+O(m^{-1}))f_\gamma(\mathbf{x})^{2/3}$ and by \cref{lem:E_u}, $|m^{-1}\E U_\gamma|=(-\Gamma(-1/3)+O(m^{-1})) f_\gamma(\mathbf{x})^{1/3}$. Thus the last term dominates and we have}
        &=3m^{-1}\left(m^{-1}\var U_\gamma)(m^{-2}\E U_\gamma\E \overline{U_\gamma}\right)^2\left(1+O(m^{-1})\right)\\
        &= 3m^{-1}\left(2^{2/3}(-\Gamma(-2/3))\right)f_\gamma(\mathbf{x})^{2/3}(-\Gamma(-1/3))^4 f_\gamma(\mathbf{x})^{4/3}\left(1+O(m^{-1})\right).
    \end{align*}
    We conclude that
    \begin{align*}
        \var V_\gamma &=  (m^{-1}+O(m^{-2}))\cdot \frac{3\cdot 2^{2/3}(-\Gamma(-2/3))}{(-\Gamma(-1/3))^2} \cdot f_\gamma(\mathbf{x})^{2}.
    \end{align*}
\end{proof}

\subsection{The Truncated Tower}
We now consider a finite truncation $(X_k)_{a\leq k < b}$. The discussion in \cref{sec:choose_t} suggests that we only need to store an interval of cells in order to estimate the $\gamma$-harmonic moment $f_\gamma(\mathbf{x})$. Define the $[a,b)$-truncated $(-1/3)$-aggregation as
\begin{align*}   
U_{\gamma;a,b}=\sum_{k=a}^{b-1} \left(1-e^{i\gamma X_k}\right)e^{\frac{1}{3}k/m},
\end{align*}
The statistic $U_{\gamma;a,b}$ can be obtained by maintaining only $(b-a)$ cells. We now formally show that the contribution to the first and second moments being truncated decays exponentially in both directions.

\begin{lemma}\label{lem:ubc_bounds}
    Let  $A=U_\gamma$, $B=U_{\gamma;-\infty,a}$ and $C=U_{b,\infty}$. Define $\norm{\mathbf{x}}_0=\sum_{v\in[n]}\ind{\mathbf{x}_v\neq 0}$.  Then
    \begin{multicols}{2}
    \begin{enumerate}
        \item $\E |A| \leq \poly(\norm{\mathbf{x}}_0m)$.
        \item $|B| \leq  e^{\frac{1}{3}a/m}\poly(\norm{\mathbf{x}}_0m)$.
        \item $\E |C| \leq e^{-\frac{2}{3}b/m}\poly(\norm{\mathbf{x}}_0m)$.
        \item $\E |A|^2 \leq \poly(\norm{\mathbf{x}}_0m)$.
        \item $ |B|^2 \leq e^{\frac{2}{3}a/m}\poly(\norm{\mathbf{x}}_0 m)$.
        \item $\E |C|^2 \leq e^{-\frac{4}{3}b/m}\poly(\norm{\mathbf{x}}_0m)$.
        \item $\E |A||B| \leq e^{\frac{1}{3}a/m}\poly(\norm{\mathbf{x}}_0m)$.
        \item $\E|B||C| \leq e^{\frac{1}{3}a/m-\frac{2}{3}b/m}\poly(\norm{\mathbf{x}}_0m)$.
        \item $\E|C||A| \leq e^{-\frac{2}{3}b/m}\poly(\norm{\mathbf{x}}_0m)$.
        \item[]
    \end{enumerate}
    \end{multicols}
\end{lemma}
\begin{proof}
\begin{enumerate}
    \item Follow from \cref{lem:fubini} by noting that $\E |\sum_{k\in \Z}(1-e^{i\gamma X_k})e^{\frac{1}{3}k/m}|\leq \sum_{k\in \Z}\E |1-e^{i\gamma X_k}|e^{\frac{1}{3}k/m}$.
    \item $|B|=|\sum_{k=-\infty}^{a-1}(1-e^{i\gamma X_k})e^{\frac{1}{3}k/m}|\leq 2\sum_{k=-\infty}^{a-1}e^{\frac{1}{3}k/m}=e^{\frac{1}{3}a/m}\cdot O(m)$.
    \item $\E|C|\leq \sum_{k=b}^\infty \E|1-e^{i\gamma X_k}|e^{\frac{1}{3}k/m}$. As in \cref{lem:fubini}, $\E |1-e^{i\gamma X_k}|\leq 2(1-e^{-e^{-k/m}\norm{\mathbf{x}}_0/2})\leq e^{-k/m}\norm{\mathbf{x}}_0$. Thus $\E|C|\leq  \sum_{k=b}^\infty \norm{\mathbf{x}}_0e^{-\frac{2}{3}k/m}=\norm{\mathbf{x}}_0 e^{-\frac{2}{3}b/m}\cdot O(m)$.
    \item Proved in \cref{lem:V_u}.
    \item By part 2.
    \item As in \cref{lem:fubini}, we have $\E |1-e^{i\gamma X_k}|^2\leq 4(1-e^{-e^{-k/m}\norm{\mathbf{x}}_0/2})\leq 2 e^{-k/m}\norm{\mathbf{x}}_0$ and for $j\neq k$, $\E |1-e^{i\gamma X_k}||1-e^{i\gamma X_j}|=\E |1-e^{i\gamma X_k}|\E |1-e^{i\gamma X_j}|\leq e^{-(k+j)/m}\norm{\mathbf{x}}_0^2$. Thus 
    \begin{align*}
        \E |C|^2 &\leq \sum_{j\geq b}\sum_{k\geq b} \E|1-e^{i\gamma X_j}||1-e^{i\gamma X_k}|e^{\frac{j+k}{3}/m}\\
        &\leq  \sum_{j\geq b}\sum_{k\geq b} e^{-(j+k)/m}\norm{\mathbf{x}}_0^2e^{\frac{j+k}{3}/m}\\
        &=\norm{\mathbf{x}}_0^2(\sum_{j\geq b}e^{-\frac{2}{3}j/m})^2\\
        &= e^{-\frac{4}{3}b/m}\poly(\norm{\mathbf{x}}_0m).
    \end{align*}
    \item By parts 1 and 2.
    \item By parts 2 and 3.
    \item Follows from parts 4 and 6. Note that $\E |A||C|\leq \sqrt{\E|A|^2 \E |C|^2}$ by Cauchy-Schwartz.
\end{enumerate}
\end{proof}

\begin{theorem}[truncated harmonic estimator]\label{thm:truncation}
For any $\gamma>0$, define the truncated estimator
    \begin{align*}        V_{\gamma;a,b}&={m^{-3}(-\Gamma(-1/3))^{-3}}{\displaystyle\prod_{j=1}^3\left(\sum_{k= a}^{b-1}\left(1-e^{i\gamma X_k^{(j)}}\right)e^{\frac{1}{3}k/m}\right)}
    \end{align*}

For any $a\leq 0\leq b$,
 \begin{align*}
     \left|\E V_{\gamma;a,b} - \E V_{\gamma}\right| &\leq (e^{\frac{1}{3}a/m}+e^{-\frac{2}{3}b/m})\poly(\norm{\mathbf{x}}_0m)\\
     \left|\var V_{\gamma;a,b} - \var V_{\gamma}\right| &\leq (e^{\frac{1}{3}a/m}+e^{-\frac{2}{3}b/m})\poly(\norm{\mathbf{x}}_0m).
 \end{align*}
\end{theorem}
\begin{proof}
Let the $A^{(j)}$, $B^{(j)}$, and $C^{(j)}$ be i.i.d.~copies of $A$, $B$, and $C$, defined in \cref{lem:ubc_bounds}.
Recall that the infinite \SymmetricPoissonTower{} 
has estimator $$V_\gamma=\frac{A^{(1)}A^{(2)}A^{(3)}}{m^3(-\Gamma(-1/3))^3}$$ and the truncated \SymmetricPoissonTower{} has estimator $$V_{\gamma;a,b}=\frac{\left(A^{(1)}-B^{(1)}-C^{(1)}\right)\left(A^{(2)}-B^{(2)}-C^{(2)}\right)\left(A^{(3)}-B^{(3)}-C^{(3)}\right)}{m^3(-\Gamma(-1/3))^3}.$$
We first bound their difference in the mean.
\begin{align*}
    &|\E V_\gamma-\E V_{\gamma;a,b}|\\
    &=\left|(\E (A-B-C))^3 - (\E A)^3\right| \\
    &\leq 3 (\E |A|)^2(\E |B|+\E |C|)+3(\E |A|)(\E |B|+\E |C|)^2 + (\E |B|+\E |C|)^3 
    \intertext{According to \cref{lem:ubc_bounds}, we know $\E|B|\leq e^{\frac{1}{3}a/m}\poly(\norm{\mathbf{x}}_0 m)$, $\E|C|\leq e^{-\frac{2}{3}b/m}\poly(\norm{\mathbf{x}}_0 m)$, and $\E|A|\leq \poly(\norm{\mathbf{x}}_0 m)$. Since we assume $a\leq 0\leq b$, we have}
    &\leq (e^{\frac{1}{3}a/m}+e^{-\frac{2}{3}b/m})\poly(\norm{\mathbf{x}}_0m),
\end{align*}
 Now we bound the difference in the second moment.
\begin{align*}
    &\left|\E |V_\gamma|^2-\E |V_{\gamma;a,b}|^2\right|\\
    &=\left|\left(\E |A-B-C|^2\right)^3 - \left(\E |A|^2\right)^3\right|\\
    &=\left|\left(\E (|A|^2+|B|^2+|C|^2-A\overline{B}-A\overline{C}-B\overline{A}+B\overline{C}-C\overline{A}+C\overline{B}))^3 - (\E |A|^2\right)^3\right|,
    \intertext{where the dominating term $(\E |A|^2)^3$ is canceled and we have }
    &=O((\E|A|^2)^2 \cdot (\E|A||B| + \E|A||C|))
    \intertext{Similarly by \cref{lem:ubc_bounds}, we know $\E|A|^2=\poly(\norm{\mathbf{x}}_0m)$, $\E|A||B|=e^{\frac{1}{3}a/m}\poly(\norm{\mathbf{x}}_0m)$, and $\E|A||C|=e^{-\frac{2}{3}b/m}\poly(\norm{\mathbf{x}}_0m)$. We thus have}
    &\leq (e^{\frac{1}{3}a/m}+e^{-\frac{2}{3}b/m})\poly(\norm{\mathbf{x}}_0m).
\end{align*}
We have proved first and second moment have vanishing differences, which implies the difference in variance is also $(e^{\frac{1}{3}a/m}+e^{-\frac{2}{3}b/m})\poly(\norm{\mathbf{x}}_0m)$.
\end{proof}

\section{Analysis of Combined Estimators}\label{sec:ana_comb}
Let $\{V_{\gamma;a,b}\}_{\gamma>0}$ be the truncated harmonic estimators in \cref{thm:truncation}. We just showed that the mean of $V_{\gamma;a,b}$ can be made arbitrarily close to the mean of $V_\gamma$ by picking small enough $a$ and large enough $b$. Choosing $a=-\Omega(m\log(mn))$ and $b=\Omega(m\log(mn))$, the difference $|\E V_{\gamma;a,b} - \E V_{\gamma}|$ becomes $O(1/\poly(mn))$. For a function $f(x)=\int_0^\infty(1-\cos(\gamma x))\,\nu(d\gamma)$, to estimate the $f$-moment one may try to combine the individual estimates as discussed in \cref{sec:combine}. However, note that
\begin{align*}
    \E \int_0^\infty V_{\gamma;a,b} \,\nu(d\gamma)&=\int_0^\infty (\E V_{\gamma}+O(1/\poly(mn))) \,\nu(d\gamma) \tag{\cref{thm:truncation}}\\
    &=\int_0^\infty ((1+O(m^{-1}))f_\gamma(\mathbf{x})+O(1/\poly(mn))) \,\nu(d\gamma) \tag{\cref{thm:abstract}}\\
    &=(1+O(m^{-1}))f(\mathbf{x}) + O(1/\poly(mn))\cdot \int_0^\infty \,\nu(d\gamma),
\end{align*}
where in general $\int_0^\infty \,\nu(d\gamma)$ can be \emph{infinite}. For example, to estimate the $L_1$-moment, we have $f(x)=|x|=\int_0^\infty (1-\cos(\gamma x)) \frac{2}{\pi \gamma^2}\, d\gamma$. For this function $f$, we have $\int_0^\infty  \frac{2}{\pi \gamma^2}\, d\gamma =\infty$. Recall that for $\int_0^\infty (1-\cos(\gamma x))\,\nu(d\gamma)$ to be well-defined, one only needs $\int_0^\infty \min(\gamma^2,1)\,\nu(d\gamma)$ to be finite. The reason that the attempt above fails is that, when $\gamma$ is near 0, the density $\nu(d\gamma)$ is unbounded and any non-zero additive bias is too large for the combined estimator. We thus need to split $f$ into two parts $f_{<\zeta}(x)=\int_0^\zeta (1-\cos(\gamma x))\,\nu(d\gamma)$ and $f_{\geq\zeta}(x)=\int_\zeta^\infty (1-\cos(\gamma x))\,\nu(d\gamma)$ for estimation, where $\zeta$ is to be determined later.

Before we proceed, we first define a quantity that measures the ``size'' of the function $f$.
\begin{definition}[function constant $C_f$]
For any $f=\int_0^\infty (1-\cos(\gamma x))\,\nu(d\gamma)$ where $\nu$ is a positive measure, define $C_f = \int_0^\infty \min(\gamma^2,1)\,\nu(d\gamma)$. Note that  for any $x\in\R$, $f(x)$ is finite if and only if $C_f<\infty$.
\end{definition}

\subsection{Estimation of the $f_{<\zeta}$-Moment}\label{sec:L2}
We first show that, when $\zeta$ is small enough, the $f_{<\zeta}$-moment can be approximated by the $L_2$-moment $\norm{\mathbf{x}}_2^2$.
\begin{lemma}\label{lem:small_part}
Recall that $f_{<\zeta}(x)=\int_0^{\zeta} (1-\cos (\gamma x)) \, \nu(d\gamma) $. The $f_{<\zeta}$-moment of $\mathbf{x}\in \Z^n$ can be bounded as follows.
\begin{align*}
    \left(1-\frac{\zeta^2\norm{\mathbf{x}}_2^2}{12}\right)\cdot \frac{1}{2}\norm{\mathbf{x}}_2^2\int_0^{\zeta} \gamma^2 \, \nu(d\gamma)\leq f_{<\zeta}(\mathbf{x})\leq \frac{1}{2}\norm{\mathbf{x}}_2^2\int_0^{\zeta} \gamma^2 \, \nu(d\gamma).
\end{align*}
\end{lemma}
\begin{proof}
We first note the following approximation of $(1-\cos(x))$. For any $x\in \R$, $$x^2/2-x^4/24 \leq 1-\cos(x) \leq x^2/2.$$
On the one hand, we have
\begin{align*}
    \sum_{v\in[n]}\int_0^{\zeta} (1-\cos (\gamma \mathbf{x}_v)) \, \nu(d\gamma) &\leq \sum_{v\in[n]}\int_0^{\zeta} \gamma^2 \mathbf{x}_v^2/2 \, \nu(d\gamma)\\
    &=\frac{1}{2}\norm{\mathbf{x}}_2^2\int_0^{\zeta} \gamma^2 \, \nu(d\gamma).
\end{align*}
On the other hand,
\begin{align*}
\sum_{v\in[n]}\int_0^{\zeta} (1-\cos (\gamma \mathbf{x}_v)) \, \nu(d\gamma) &\geq \sum_{v\in[n]}\int_0^{\zeta} (\gamma^2 \mathbf{x}_v^2/2-\gamma^4 \mathbf{x}_v^4/24) \, \nu(d\gamma)\\  
&= \frac{1}{2}\norm{\mathbf{x}}_2^2\int_0^{\zeta} \gamma^2 \, \nu(d\gamma) - \frac{1}{24}\norm{\mathbf{x}}_4^4\int_0^{\zeta} \gamma^4 \, \nu(d\gamma)\\
&\geq \frac{1}{2}\norm{\mathbf{x}}_2^2\int_0^{\zeta} \gamma^2 \, \nu(d\gamma) - \frac{1}{24}\norm{\mathbf{x}}_4^4 \zeta^2\int_0^{\zeta} \gamma^2 \, \nu(d\gamma).
\end{align*}
Finally we note that $\norm{\mathbf{x}}_4\leq \norm{\mathbf{x}}_2$.
\end{proof}
Thus by choosing $\zeta=O(1/\poly(m\norm{\mathbf{x}}_2))$, $\frac{1}{2}\norm{\mathbf{x}}_2^2\int_0^{\zeta} \gamma^2 \, \nu(d\gamma)$ will be a $(1\pm O(m^{-1}))$-approximation of the $f_{<\zeta}$-moment. It suffices now to approximate the $L_2$-moment $\norm{\mathbf{x}}_2^2$. The classic $L_2$-moment can be estimated with a separate AMS  sketch \cite{AlonMS99}. Nevertheless, to demonstrate the universality of the \SymmetricPoissonTower{} 
sketch, we will use its cells $X_0,\ldots,X_{m-1}$ to estimate $L_2$, which are \emph{not} independent from the harmonic estimators $\{V_{\gamma;a,b}\}$.
\begin{lemma}[$L_2$-moment estimator]\label{lem:L2}
Let $(X_k)_{a\leq k<b}$ be a truncated symmetric Poisson tower with parameter $m$ where $a\leq 0$ and $b\geq m$. Then we define the $L_2$-estimator,
\begin{align*}
    \Psi &= \frac{1}{m} \sum_{k=0}^{m-1} X_k^2e^{k/m}.
\end{align*}
We have
\begin{align*}
    \E \Psi &= \norm{\mathbf{x}}_2^2\\
    \var \Psi &\leq (3+e)m^{-1}\norm{\mathbf{x}}_2^4.
\end{align*}
\end{lemma}
\begin{proof}
    Recall that $X_k=\sum_{v\in[n]} Z_{v,k}\mathbf{x}_v$ where $Z_{v,k}\sim \SymmetricPoisson(e^{-k/m})$. By \cref{lem:symp_proper}, we know $\E Z_{v,k}=0$, $\E Z_{v,k}^2=e^{-k/m}$, and $\E Z_{v,k}^4=e^{-k/m}+3e^{-2k/m}$. Thus similar to the AMS sketch,
    \begin{align*}
        \E X_k^2 = \sum_{v\in[n]} \mathbf{x}_v^2 \E Z_{v,k}^2=e^{-k/m}\norm{\mathbf{x}}_2^2.
    \end{align*}
    Thus $\E \Psi = \norm{\mathbf{x}}_2^2$. Then
    \begin{align*}
        \E X_k^4 &=  \sum_{v\in[n]} \mathbf{x}_v^4 \E Z_{v,k}^4 + 6\sum_{s<t} \mathbf{x}_s^2\mathbf{x}_t^2 \E Z_{v,s}^2\E Z_{v,t}^2\\
        &\leq (e^{-k/m}+3e^{-2k/m})\norm{\mathbf{x}}_2^4
    \end{align*}Now
    \begin{align*}
        \E \Psi^2 &\leq \frac{1}{m^2}\sum_{k=0}^{m-1}\E X_k^4 e^{2k/m} + (\frac{1}{m} \sum_{k=0}^{m-1} \E X_k^2e^{k/m})^2\\
        &= \frac{\norm{\mathbf{x}}_2^4}{m^2} \sum_{k=0}^{m-1}(e^{k/m}+3)+ \norm{\mathbf{x}}_2^4\\
        &\leq \frac{\norm{\mathbf{x}}_2^4}{m} (3+e)+ \norm{\mathbf{x}}_2^4.
    \end{align*}
Thus $\var \Psi\leq \E \Psi^2-(\E \Psi)^2\leq  (3+e)m^{-1}\norm{\mathbf{x}}_2^4$.
\end{proof}

\subsection{Estimation of the $f_{\geq \zeta}$-Moment}\label{sec:est_fgeps}
We proceed as in \cref{sec:combine}. Let $\{V_{\gamma;a,b}\}_{\gamma\geq \zeta}$ be the set of harmonic estimators generated from the truncated triple \SymmetricPoissonTower{} in \cref{thm:truncation}. For $f_{\geq \zeta}(x)=\int_\zeta^\infty (1-\cos(\gamma x))\,\nu(d\gamma)$, define
\begin{align*}
    \phi_{f_{\geq \zeta};a,b} &= \int_\zeta^\infty V_{\gamma;a,b} \,\nu(d\gamma).
\end{align*}
By \cref{thm:truncation}, we know $\E V_{\gamma;a,b} = \E V_\gamma + O(\delta)$ where $\delta =(e^{\frac{1}{3}a/m}+e^{-\frac{2}{3}b/m})\poly(\norm{\mathbf{x}}_0m)$.

We have
\begin{align*}
    \E \phi_{f_{\geq \zeta};a,b} &=  \int_\zeta^\infty \E V_{\gamma;a,b} \,\nu(d\gamma) \\
    &= \int_\zeta^\infty ((1+O(m^{-1}))f_\gamma(\mathbf{x}) + O(\delta)) \,\nu(d\gamma) \\
    &=(1+O(m^{-1})) \int_\zeta^\infty  f_\gamma(\mathbf{x}) \,\nu(d\gamma) + O(\delta) \int_\zeta^\infty\,\nu(d\gamma)
    \intertext{note that since $\zeta<1$, we have $\int_\zeta^\infty\,\nu(d\gamma)\leq \zeta^{-2}\int_0^\infty \min(\gamma^2,1) \,\nu(d\gamma) =\zeta^{-2}C_f$}
    &=(1+O(m^{-1})) \int_\zeta^\infty  f_\gamma(\mathbf{x}) \,\nu(d\gamma) + O(\delta) \zeta^{-2}C_f.
\end{align*}
The variance can be bounded similarly and we will leave it in the proof of \cref{thm:frequency_combination}.

\subsection{Combination}
We now prove the main technical theorem.
\begin{theorem}[``frequency-domain'' estimator]\label{thm:frequency_combination}
Let $\mathbf{x}\in \Z^n$ be the input frequency vector and  $(X_{k}^{(j)})_{a\leq k<b,j=1,2,3}$ be a $[a,b)$-truncated triple \SymmetricPoissonTower{} with parameter $m$. For any function $f(x)=\sigma^2 x^2/2+\int_0^\infty (1-\cos(\gamma x))\,\nu(d\gamma)$ where $\nu$ is a positive measure with $C_f = \int_0^\infty \min(\gamma^2,1)\,\nu(d\gamma)$. Denote the estimator for $f$ as
\begin{align*}
    \phi_{f;a,b} &= \frac{1}{2}\left(\sigma^2+\int_0^\zeta \gamma^2\,\nu(d\gamma)\right)\frac{\sum_{j=1}^3\sum_{k=0}^{m-1}(X_k^{(j)})^2e^{k/m}}{3m}\\
    &\quad + \int_\zeta^\infty \frac{\prod_{j=1}^3\left(\sum_{k=a}^{b-1}(1-e^{i\gamma X_k^{(j)}})e^{\frac{1}{3}k/m}\right)}{m^3(-\Gamma(-1/3))^3} \,\nu(d\gamma).
\end{align*}
Set $\zeta=O(1/\poly(\norm{\mathbf{x}}_2m))$. If $a=-\Omega(m\log(\norm{\mathbf{x}}_2m))$ and $b=\Omega(m\log(\norm{\mathbf{x}}_2m))$, then $\phi_{f;a,b}$ is a $(1\pm O(m^{-1}))$-approximate of the $f$-moment with negligible bias.
\begin{align*}
    \E \phi_{f;a,b} &= (1+O(m^{-1}))\sum_{v\in[n]}f(\mathbf{x}_v) + O(\frac{C_f}{\poly(\norm{\mathbf{x}}_2 m)})\\
    \var \phi_{f;a,b} &\leq O(m^{-1}) \cdot \left(\sum_{v\in[n]}f(\mathbf{x}_v)\right)^2+ O(\frac{C_f^2}{\poly(\norm{\mathbf{x}}_2 m)}).\\
\end{align*}
\end{theorem}
\begin{proof}
For simplicity, we write $\phi_{f;a,b}$ as $\alpha A + B$ where $\alpha = \frac{1}{2}(\sigma^2+\int_0^\zeta \gamma^2\,\nu(d\gamma))$, $A = \frac{\sum_{j=1}^3\sum_{k=0}^{m-1}(X_k^{(j)})^2e^{k/m}}{3m}$, and $B=\int_\zeta^\infty \frac{\prod_{j=1}^3(\sum_{k=a}^{b-1}(1-e^{i\gamma X_k^{(j)}})e^{\frac{1}{3}k/m})}{m^3(-\Gamma(-1/3))^3} \,\nu(d\gamma)$. By \cref{lem:L2}, we know
\begin{align*}
    \E A &= \norm{\mathbf{x}}_2^2\\
    \var A &\leq (1+e/3)m^{-1}\norm{\mathbf{x}}_2^4.
\end{align*}
With \cref{lem:small_part}, we know 
\begin{align*}
    \E \alpha A &= (1+O(m^{-1}))f_{<\zeta}(\mathbf{x}) \\
    \var (\alpha A) &\leq (1+e/3)m^{-1}\left(f_{<\zeta}(\mathbf{x})\right)^2(1+O(m^{-1}))
\end{align*}
Since $\mathbf{x}\in \Z^n$, we always have $\norm{\mathbf{x}}_0\leq \norm{\mathbf{x}}_2^2$. By the calculation in \cref{sec:est_fgeps} and the assumptions on $\zeta,a,b$, we have
\begin{align*}
    \E B &= (1+O(m^{-1})) \sum_{v\in[n]}f_{\geq \zeta}(\mathbf{x}_v) + C_f\cdot O(1/\poly(\norm{\mathbf{x}}_2m)) 
    \intertext{Similarly, for the variance we have}
    \var B &\leq \left(\int_\zeta^\infty \sqrt{\var V_{\gamma;a,b}}\,\nu(\gamma)\right)^2 \\
    &\leq \left(\int_\zeta^\infty \sqrt{\var V_\gamma + O(1/\poly(\norm{\mathbf{x}}_2m))}\,\nu(\gamma)\right)^2 \tag{\cref{thm:truncation}}\\
    &\leq \left(\int_\zeta^\infty \sqrt{(m^{-1}+O(m^{-2}))\cdot C \cdot f_\gamma(\mathbf{x})^2 + O(1/\poly(\norm{\mathbf{x}}_2m))}\,\nu(\gamma)\right)^2 \tag{\cref{thm:abstract}}
    \intertext{note that $f_\gamma(\mathbf{x})\leq 2\norm{\mathbf{x}}_0\leq 2\norm{\mathbf{x}}_2^2$ and $\int_\zeta^{\infty}\,\nu(d\gamma)\leq \zeta^{-2}\int_\zeta^{\infty}\min(\gamma^2,1)\,\nu(d\gamma) = \zeta^{-2}C_f $.}
    &\leq \left(\int_\zeta^\infty \sqrt{(m^{-1}+O(m^{-2}))\cdot C \cdot f_\gamma(\mathbf{x})^2} + \sqrt{O(1/\poly(\norm{\mathbf{x}}_2m))}\,\nu(\gamma)\right)^2 \\
    &\leq \left( \sqrt{(m^{-1}+O(m^{-2}))\cdot C}f_{\geq\zeta}(\mathbf{x}) + \sqrt{O(1/\poly(\norm{\mathbf{x}}_2m))}\zeta^{-2}C_f\right)^2 
\end{align*}
Note that we have $1+e/3<2$ and $C=\frac{3\cdot 2^{2/3}(-\Gamma(-2/3))}{(-\Gamma(-1/3))^2}<2$. 
We conclude that
\begin{align*}
    \E(\alpha A + B) &= (1+O(m^{-1}))f(\mathbf{x})+ C_f\cdot O(1/\poly(\norm{\mathbf{x}}_2m))\\
    \var(\alpha A + B) &\leq (\sqrt{\var(\alpha A)}+\sqrt{\var B})^2\\
    &\leq \left(\sqrt{2(m+O(m^{-2}))}f(\mathbf{x}) + O(1/\poly(\norm{\mathbf{x}}_2m))\cdot C_f\right)^2
    \intertext{We know  $f_{\geq\zeta}(\mathbf{x})\leq \sum_{v\in[n]}\int_\zeta^\infty 2 \,\nu(d\gamma)= O(\zeta^{-2}C_f\norm{\mathbf{x}}_0) $ and $f_{<\zeta}(\mathbf{x})\leq \frac{1}{2}\norm{\mathbf{x}}_2^2 C_f$ (\cref{lem:small_part}). Therefore $f(\mathbf{x})=O(\zeta^{-2}C_f\norm{\mathbf{x}}_0)$. We have assumed $\zeta=O(1/\poly(\norm{\mathbf{x}}_2m))$ and thus }
    &\leq O(m^{-1})\left(f(\mathbf{x})\right)^2 + O(1/\poly(\norm{\mathbf{x}}_2m))\cdot C_f^2.
\end{align*}
\end{proof}

The ``frequency domain'' estimator above might look daunting but in practice $f$ can usually be approximated with a few main harmonic moments. In the case where $f$ cannot be approximated by sparse harmonics, we provide the following alternative estimator, which can be computed with at most $O(m^{3}\log^3(nMm))$ evaluations of the function $f_{\geq \zeta}$.
\begin{corollary}[``time-domain'' estimator]
With the setup in \cref{thm:frequency_combination}, define  $\psi_{f;a,b}$ as
\begin{align*}
    \psi_{f;a,b} &= \frac{1}{2}\left(\sigma^2+\int_0^\zeta \gamma^2\,\nu(d\gamma)\right)\frac{\sum_{j=1}^3\sum_{k=0}^{m-1}(X_k^{(j)})^2e^{k/m}}{3m}\\
    &\quad + {m^{-3}(-\Gamma(-1/3))^{-3}}\sum_{j=a}^{b-1}\sum_{k=a}^{b-1}\sum_{l=a}^{b-1}e^{\frac{k+j+l}{3m}}\left(f_{\geq \zeta}(X_j^{(1)})+f_{\geq \zeta}(X_k^{(2)})+f_{\geq \zeta}(X_l^{(3)})\right.
    \\
    &\quad\left. -f_{\geq \zeta}(X_j^{(1)}+X_k^{(2)})-f_{\geq \zeta}(X_k^{(2)}+X_l^{(3)})-f_{\geq \zeta}(X_l^{(3)}+X_j^{(1)})+ f_{\geq \zeta}(X_j^{(1)}+X_k^{(2)}+X_l^{(3)})\right).
\end{align*}
Then $\psi_{f;a,b}$ is a $(1\pm O(m^{-1}))$-approximate of the $f$-moment with negligible bias. In particular,
\begin{align*}
    \psi_{f;a,b} = \Re(\phi_{f;a,b}).
\end{align*}

\end{corollary}
\begin{proof}

Note that
\begin{align*}
    &\prod_{j=1}^3\left(\sum_{k=a}^{b-1}(1-e^{i\gamma X_k^{(j)}})e^{\frac{1}{3}k/m}\right)\\
    &=\sum_{j=a}^{b-1}\sum_{k=a}^{b-1}\sum_{l=a}^{b-1}(1-e^{i\gamma X_j^{(1)}})(1-e^{i\gamma X_k^{(2)}})(1-e^{i\gamma X_l^{(3)}})e^{\frac{1}{3}(k+j+l)/m}
\end{align*}
where for any integers $A_1,A_2,A_3$
\begin{align*}
    &(1-e^{i\gamma A_1})(1-e^{i\gamma A_2})(1-e^{i\gamma A_3}) \\
    &= 1-\sum_j e^{i\gamma A_j}+\sum_{j<k}e^{i\gamma (A_1+A_2)}-e^{i\gamma(A_1+A_2+A_3)}\\
    &= (1-e^{i\gamma(A_1+A_2+A_3)})+\sum (1-e^{i\gamma A_j})-\sum_{j<k}(1-e^{i\gamma (A_j+A_k)}).
\end{align*}
Note that $\Re(1-e^{i\gamma k})=1-\cos(\gamma k)$ for any $k\in \Z$. 
Thus by change the order of the triple-summation and integration we have what we want. Finally, note that for any complex random variable $W$, $\var W = \var \Re W + \var \Im W$ and thus $\var \psi_{f;a,b}\leq \var \phi_{f;a,b}$. 
\end{proof}

\subsection{Signed Combination}
As in \cite{cohen2017hyperloglog,cohen2018stream}, the 
transportation of the mean and the relative variance of $\{V_{\gamma}\}_{\gamma >0}$ in the combined estimator $\int_0^\infty V_\gamma \,\nu(d\gamma)$ relies crucially on the fact that $\nu$ is a positive measure. In general, when $\nu$ is a signed measure, 
the combined mean \cref{eq:cohen_mean} still holds but the combined variance \cref{eq:cohen_var} now depend on $|\nu|$.
\begin{corollary}[signed combination]\label{cor:signed}
Let $f(x)=\sigma_1^2 x^2/2+\int_0^\infty (1-\cos(\gamma x))\,\nu_1(d\gamma)$ and $g(x)=\sigma_2^2 x^2/2+\int_0^\infty (1-\cos(\gamma x))\,\nu_2(d\gamma)$ where $\nu_1,\nu_2$ are positive measures. Let $\phi_{f;a,b}$ and $\phi_{g;a,b}$ be generated from a same truncated triple \SymmetricPoissonTower{} sketch in \cref{thm:frequency_combination}. Then
\begin{align*}
    \E (\phi_{f;a,b}-\phi_{g;a,b}) &= (1+O(m^{-1}))(f(\mathbf{x})-g(\mathbf{x})) + O(\frac{C_f+C_g}{\poly(\norm{\mathbf{x}}_2 m)})\\
    \var (\phi_{f;a,b}-\phi_{g;a,b}) &\leq O(m^{-1}) \cdot \left(f(\mathbf{x})+g(\mathbf{x})\right)^2+ O(\frac{C_f^2+C_g^2}{\poly(\norm{\mathbf{x}}_2 m)}).\\
\end{align*}
\end{corollary}
\begin{proof}
Both follow from \cref{thm:frequency_combination}.
    The mean results from the linearity of expectation. For the variance, note that
    \begin{align*}
         \var (\phi_{f;a,b}-\phi_{g;a,b}) \leq (\sqrt{\var  \phi_{f;a,b}} + \sqrt{\var  \phi_{g;a,b}})^2.
    \end{align*}
\end{proof}
Thus for $(\phi_{f;a,b}-\phi_{g;a,b})$ to be a multiplicative approximation of the $(f-g)$-moment, one needs $f(\mathbf{x})+g(\mathbf{x})=\Omega (f(\mathbf{x})-g(\mathbf{x}))$ for any $\mathbf{x}\in \Z^n$.

\section{Proof of the Main Theorem}\label{sec:main_proof}
The main theorem (\cref{thm:main}) almost directly follows from \cref{thm:frequency_combination} and \cref{cor:signed}. We only need to analyze the space and translate the variance guarantee to $(1\pm \epsilon)$-approximation with Chebyshev's inequality.

\begin{lemma}\label{lem:l1}
With probability $5/6$, all the cells in the triple \SymmetricPoissonTower{} have magnitude at most $O(\poly(Mmn))$. 
\end{lemma}
\begin{proof}
The truncated tower stores $X_j$s from $a=-\Omega(m\log (\norm{\mathbf{x}}_2m))$ to $b=\Omega(m\log (\norm{\mathbf{x}}_2m))$. Suppose $b-a=C_1 m\log (\norm{\mathbf{x}}_2m)$  for some $C_1>0$. Each cell $j$ stores an integer. Specifically $X_j\sim \sum_{v\in[n]}\mathbf{x}_v Y_v$ where $Y_v\sim\SymmetricPoisson(e^{-k/m})$. By a simple coupling argument, we have $|X_j|\leq M Y$ where $M$ is the frequency upper bound and $Y\sim \Poisson(ne^{-k/m})$. Therefore by Markov's inequality,
\begin{align*}
    \pr(|X_j|\geq C_2m\log (\norm{\mathbf{x}}_2m) M ne^{-k/m}) \leq \pr( Y\geq C_2 m\log (\norm{\mathbf{x}}_2m) n e^{-k/m})  \leq \frac{1}{C_2 m\log (\norm{\mathbf{x}}_2m)}.
\end{align*}
Choosing $C_2=18C_1$, by the union bound, with probability $1-\frac{C_1m\log (\norm{\mathbf{x}}_2m)}{C_2 m\log (\norm{\mathbf{x}}_2m)}=17/18$, that $|X_j|< C_2m\log (\norm{\mathbf{x}}_2m) M ne^{-k/m}=O(\poly(Mmn))$ for all $j\in[a,b)$, where we bound $\norm{\mathbf{x}}_2$ by $nM^2$. With probability at least 5/6 that this event occurs for all three truncated towers.    
\end{proof}

\begin{lemma}\label{lem:l2}
    With probability 5/6, the triple \SymmetricPoissonTower{} with $m=\Omega(\epsilon^{-2})$ outputs a correct $(1\pm \epsilon)$-approximation of the $f$-moment.
\end{lemma}
\begin{proof}    
 By \cref{thm:frequency_combination}, an estimate of the $f$-moment can be returned with $O(m^{-1})$ relative variance and an $O(C_f/O(\poly(\norm{\mathbf{x}}_2m)))$ additive bias. Note that for any non-empty stream, the $f$-moment $f(\mathbf{x})$ is at least $\xi(f,M)$. Thus the additive bias  can be absorbed in the multiplicative term since we have $\xi(f,M)=\Omega(C_f/\poly(nMm))$ by assumption in \cref{thm:main}. Therefore, the estimator is a $(1\pm O(m^{-1/2}))$-approximation with probability $5/6$ by Chebyshev's inequality with a suitable constant in $O(m^{-1/2})$. It suffices to set $m=\Omega(\epsilon^{-2})$ for a $(1\pm \epsilon)$-approximation. 
\end{proof}

\begin{proof}[Proof of \cref{thm:main}]
    By a union bound of the events in \cref{lem:l1,lem:l2}, with probability 2/3 the estimate is a correct $(1\pm \epsilon)$-approximation and at the same time $O(\log(Mmn))$ bits suffice to store each cell for the three i.i.d.~copies of truncated tower. 
There are $O(m\log(\norm{\mathbf{x}}_2m))=O(m\log(nMm))$ cells. Thus the total space in bits is $O(m\log^2(nMm))=O(\epsilon^{-2}\log^2(nM\epsilon^{-1}))$. The proof for the signed combination goes the same with the variance guarantee provided by \cref{cor:signed}.
\end{proof}

\section{Conclusion}

The set of tractable $f$-moments are those that can be approximated in polylog space.  It is a major open question to fully characterize this class.  To date, Braverman, Chestnut, Woodruff, and Yang~\cite{braverman2016streaming} have 
a nearly-complete characterization of this class, 
but it cannot handle the class of tractable nearly 
periodic functions.

In this work, we proposed a new method for estimating $f$-moments based on a harmonic decomposition of $f$.  The \SymmetricPoissonTower{} is a new 
$O(\epsilon^{-2}\log^2 n)$-space data structure that
can estimate all the important functions in Braverman et al.'s class~\cite{braverman2016streaming}, as well as many nearly periodic functions.  We conjecture that the \SymmetricPoissonTower{} is a universal sketch for the class of tractable functions. 

\begin{conjecture}    
The \SymmetricPoissonTower{} is universal for the class of tractable functions.
\end{conjecture}

There is some subtlety in this conjecture, as the tractable class probably does not have a sharp boundary.
For fixed $f,g \in \mathcal{F}_*$ (\cref{sec:new-results}), define
$\tilde{h}(M)=(\max_{x\in [M]} \frac{f(x)+g(x)}{f(x)-g(x)})^2$.
As $\tilde{h}(M)$ becomes large, more and more functions that can be written as $f-g$ can be included in the range with the cost of having an increasingly large $\tilde{h}(M)$ factor in space. 
This is analogous to the role that $h(M)$ plays in Braverman et al.~\cite{braverman2016streaming}.




\bibliographystyle{alpha}
\bibliography{refs}

\appendix

\section{Example Functions}
\subsection{Nearly Periodic $g_{np}$}
\label{sec:decomp_gnp}
The function $g_{np}$ is constructed in \cite{braverman2016streaming} for the purpose of showing there are tractable functions outside the reach of $L_2$-heavy hitters framework. We decompose $g_{np}$ into harmonic moments and compute its function constant $C_{g_{np}}$ in this section. Recall the following definitions.
\begin{itemize}
    \item 
$\mathbb{B}=\{\sum_{j=1}^k a_j2^{-j}\mid \forall j, a_j\in\{0,1\},k\in\N\}$ is the set of finite-precision binary numbers in $[0,1)$; 
\item $\tau:\Z_+\to \Z_+$ with $\tau(x)=\max\{j\in\N\mid 2^j|x\}$ which returns the position of the least significant bit of $x$; 
\item  $\tau_*:\mathbb{B}\to \Z_+$ with $\tau_*(r)=\min\{j\in\N\mid 2^{-j}|r\}$ which returns the length of the binary representation of $r$.
\end{itemize}

\begin{lemma}
For any $x\in \N$, 
    \begin{align*}
        g_{np}(x) &=2^{-\tau(x)}=\frac{4}{3} \sum_{\gamma \in \mathbb{B}}(1-\cos(2\pi \gamma x))2^{-2\tau_*(\gamma)}
    \end{align*}
\end{lemma}
\begin{proof}
Partition $\mathbb{B}$ as $B_0,B_1,B_2,\ldots$ where $B_k=\{r\in\mathbb{B} \mid \tau_*(r)=k\}$. We have $B_0=\{0\}$, $B_1=\{1/2\}$, $B_2=\{1/4,3/4\}$, $B_3=\{1/8,3/8,5/8,7/8\}$, and so on. 
By symmetry, for $k\geq 2$
\begin{align*}
   \sum_{\gamma \in B_k}(1-\cos(2\pi \gamma x)) &=\begin{cases}
       2^{k-1}, & \tau(x)\leq k-2  \\
       2^{k}, & \tau(x)=k-1  \\
        0, & \tau(x)\geq k 
   \end{cases}
\end{align*}
where in the first case $\sum_{\gamma \in B_k}\cos(2\pi \gamma x)=0$, in the second case $\cos(2\pi \gamma x)=-1$ for any $\gamma\in B_k$, and in the last case, $\cos(2\pi \gamma x)=1$ for any $\gamma\in B_k$.
Thus if $\tau(x)\geq 1$, i.e., $x$ is even, then
\begin{align*}
    \sum_{\gamma \in \mathbb{B}}(1-\cos(2\pi \gamma x))2^{-2\tau_*(\gamma)} & = \sum_{k\in \N} \sum_{\gamma \in B_k}(1-\cos(2\pi \gamma x))2^{-2k} \\
    &= 2^{\tau(x)+1} 2^{-2(\tau(x)+1)}  +\sum_{k=\tau(x)+2}^{\infty}2^{k-1}\cdot  2^{-2k}\\
    &= 2^{-(\tau(x)+1)}+2^{-(\tau(x)+2)}\\
    &= \frac{3}{4}2^{-\tau(x)}.
\end{align*}
When $x$ is odd, then 
\begin{align*}
    \sum_{\gamma \in \mathbb{B}}(1-\cos(2\pi \gamma x))2^{-2\tau_*(\gamma)} & = \sum_{k\in \N} \sum_{\gamma \in B_k}(1-\cos(2\pi \gamma x))2^{-2k} \\
    &= 2 \cdot 2^{-2}+\sum_{k=2}^\infty 2^{k-1} \cdot 2^{-2k}\\
    &= \frac{3}{4}.
\end{align*}
\end{proof}
\begin{lemma}   
$C_{g_{np}}\leq \frac{2}{3}$.
\end{lemma}
\begin{proof}
\begin{align*}
  C_{g_{np}} &=   \frac{4}{3}\sum_{\gamma \in \mathbb{B}}\min(4\pi^2\gamma^2,1)2^{-2\tau_*(\gamma)} \leq \frac{4}{3}\sum_{k=1}^\infty \sum_{\gamma \in B_k}2^{-2k} =\frac{4}{3}\sum_{k=1}^\infty 2^{k-1} 2^{-2k} = \frac{2}{3}.
\end{align*}
    
\end{proof}

\subsection{The Golden Harmonic Moment $g_{gold}$}\label{sec:gold}
The golden harmonic moment $g_{gold}$ is constructed for additional examples that are tractable in the harmonic sketching framework but out of the reach of prior techniques.

With the pigeonhole argument in \cref{lem:why_fail}, it is clear that  $\xi(g_{gold},M) =O(1/M^2)$. We now bound $\xi(g_{gold},M)$ from below, which is based on the theory of continued fractions. 
\begin{lemma} 
Let $g_{gold}(x) = 1-\cos(\frac{1+\sqrt{5}}{2}\cdot 2\pi x)$ for $x\in \Z$. Then $\xi(g_{gold},M) =\Omega(1/M^2)$. 
\end{lemma}
\begin{proof}
    The convergents of $\frac{1+\sqrt{5}}{2}$ are $(F_{n+1}/F_{n})_{n\in \N}$ where $(F_n)_{n\in \N}$ is the Fibonacci sequence. It is known that for any $p,q\in \N$ such that $q\leq F_n$, $|\frac{1+\sqrt{5}}{2}-p/q|\geq |\frac{1+\sqrt{5}}{2}-F_{n+1}/F_n|$. Thus for any $x\in \N$, let $\frac{1+\sqrt{5}}{2} x = w +r $ where $w\in \N$ and $r\in (-1/2,1/2)$. Let $F_n\geq x$. Then
    \begin{align*}
        |r/x| = |\frac{1+\sqrt{5}}{2}-\frac{w}{x}| \geq |\frac{1+\sqrt{5}}{2}-F_{n+1}/F_n| \geq |\frac{1+\sqrt{5}}{2}-F_{n+2}/F_{n+1}|
    \end{align*}
    Thus 
    \begin{align*}
        2|r/x| &\geq |\frac{1+\sqrt{5}}{2}-F_{n+1}/F_n| + |\frac{1+\sqrt{5}}{2}-F_{n+2}/F_{n+1}|\\
        &\geq |F_{n+1}/F_n-F_{n+2}/F_{n+1}|\\
        &= \frac{1}{F_n F_{n+1}}
    \end{align*}
    Note that $F_n = \Theta((\frac{1+\sqrt{5}}{2})^n)$ and thus we may choose $F_n,F_{n+1}$ that are $O(x)$. Therefore $|r|=\Omega(1/x)$, which implies $1-\cos(\frac{1+\sqrt{5}}{2}\cdot 2\pi x)=1-\cos(w2\pi+2\pi r)=1-\cos(2\pi r)\leq 2\pi^2 r^2=\Omega(1/x^2)$.
\end{proof}

\section{Mathematical Lemmas}\label{sec:math-lemmas}
In this section we estimate the rate that sums like $\frac{1}{m}\sum_{k\in\Z}(1-e^{-e^{-k/m}})e^{\frac{1}{3}k/m}$ converge to its limit $\int_{-\infty}^\infty(1-e^{-e^{-r}})e^{\frac{1}{3}r}\,dr$ as $m$ increases, supporting our analysis of the smoothed (-1/3)-aggregation.
\begin{lemma}\label{lem:sum_to_integral}
    Let $h:\R\to \C$ be a differentiable function. If both $h$ and $h'$ are (Lebesgue) integrable, then 
    \begin{align*}
        \left|\int_{-\infty}^\infty h(s)\,ds - \frac{1}{m}\sum_{k=-\infty}^\infty h(k/m)\right| \leq m^{-1} \int_{-\infty}^\infty |h'(s)|\,ds.
    \end{align*}
\end{lemma}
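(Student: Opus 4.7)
The plan is to reduce the global bound to a local, per-cell bound and then sum. The sum $\frac{1}{m}\sum_k h(k/m)$ is a left Riemann sum for $\int h$ with mesh $1/m$, so the natural approach is to partition $\R$ into intervals $I_k = [k/m,(k+1)/m]$ and on each one compare $\int_{I_k} h(s)\,ds$ with the rectangle $\frac{1}{m}h(k/m) = \int_{I_k} h(k/m)\,ds$.

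First I would write
\begin{align*}
\int_{I_k} h(s)\,ds - \frac{1}{m}h(k/m) = \int_{I_k} \bigl(h(s)-h(k/m)\bigr)\,ds = \int_{I_k}\int_{k/m}^{s} h'(t)\,dt\,ds,
\end{align*}
using the fundamental theorem of calculus (valid since $h$ is differentiable and $h'$ is integrable). Taking absolute values and extending the inner integration to all of $I_k$ gives the pointwise bound $|h(s)-h(k/m)| \le \int_{I_k}|h'(t)|\,dt$ for every $s \in I_k$, hence
\begin{align*}
\left|\int_{I_k} h(s)\,ds - \frac{1}{m}h(k/m)\right| \le \frac{1}{m}\int_{I_k}|h'(t)|\,dt.
\end{align*}

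Second, I would sum this inequality over $k\in\Z$. Since $h$ and $h'$ are integrable, Fubini/Tonelli legitimizes interchanging the sum with both the integrals over $I_k$ (which partition $\R$) and lets me combine them:
\begin{align*}
\left|\int_{-\infty}^\infty h(s)\,ds - \frac{1}{m}\sum_{k\in\Z} h(k/m)\right|
&\le \sum_{k\in\Z}\left|\int_{I_k} h(s)\,ds - \frac{1}{m}h(k/m)\right|
\le \frac{1}{m}\int_{-\infty}^\infty |h'(t)|\,dt.
\end{align*}

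There is no real obstacle here; the only points that need care are (i) justifying the telescoping/partition step, which follows because $\{I_k\}_{k\in\Z}$ is a disjoint cover of $\R$ up to a null set, and (ii) making sure the sum $\sum_k h(k/m)$ is absolutely convergent so that rearrangement is legal, which follows from $h\in L^1$ together with the local bound $|h(k/m)| \le |h(s)| + \int_{I_k}|h'|$ for any $s\in I_k$, giving $\frac{1}{m}\sum_k |h(k/m)| \le \|h\|_1 + \frac{1}{m}\|h'\|_1 < \infty$. With these in place the calculation above concludes the proof.
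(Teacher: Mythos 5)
Your proposal is correct and takes essentially the same approach as the paper: both decompose the error into per-interval contributions over $[k/m,(k+1)/m]$, apply the fundamental theorem of calculus to write $h(s)-h(k/m)$ as an integral of $h'$, enlarge the inner range of integration to the full subinterval, and then sum over $k$. The only addition in your write-up is the explicit check that $\sum_k h(k/m)$ converges absolutely, which the paper leaves implicit; this is a welcome but minor refinement, not a different method.
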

\begin{proof}
    We bound the difference    
\begin{align*}
    \left|\int_{-\infty}^\infty h(s)\,ds - \frac{1}{m}\sum_{k=-\infty}^\infty h(k/m)\right| &= \left|\sum_{k=-\infty}^\infty \left(\int_0^{1/m}h(k/m+s)\,ds - \frac{1}{m}h(k/m)\right)\right|\\
        &\leq \sum_{k=-\infty}^\infty \left|\int_0^{1/m}h(k/m+s)\,ds - \frac{1}{m}h(k/m)\right|.
\end{align*}
Note that
\begin{align*}
\left|\int_0^{1/m}h(k/m+s)\,ds - \frac{1}{m}h(k/m)\right|&= \left|\int_0^{1/m}\int_0^s h'(k/m+t)\,dtds\right|\\
&\leq \int_0^{1/m}\int_0^s |h'(k/m+t)|\,dtds\\
&\leq \int_0^{1/m}\int_0^{1/m} |h'(k/m+t)|\,dtds\\
&=\frac{1}{m}\int_0^{1/m} |h'(k/m+t)|\,dt.
\end{align*}
Thus
\begin{align*}
    \left|\int_{-\infty}^\infty h(s)\,ds - \frac{1}{m}\sum_{j=-\infty}^\infty h(j/m)\right|
    &\leq\sum_{k=-\infty}^\infty \frac{1}{m}\int_0^{1/m} |h'(k/m+t)|\,dt\\
    &= \frac{1}{m}\int_{-\infty}^{\infty} |h'(t)|\,dt.
\end{align*}
\end{proof}

\begin{lemma}\label{lem:eta}
    Let $a,b\in \C$ and $c\in \R$. Define 
    \begin{align*}
        \eta_1(x;a,b,c) &= (e^{-ae^{-x}}-e^{-be^{-x}})e^{cx},
&        \eta_2(x;a,c) &= ae^{-x}e^{-ae^{-x}}e^{cx}.
    \end{align*} 
    
    If $\Re(a)\geq 0$, $\Re(b)\geq 0$ and $c\in (0,1)$, then
    \begin{enumerate}
        \item $\int_{-\infty}^\infty \eta_1(x;a,b,c)\,dx = (a^c-b^c)\Gamma(-c),$ where $\Gamma(\cdot)$ here denotes the Gamma function (\cref{fig:gamma_function});\label{item:lem:eta1}
        \item $\eta_1'(x;a,b,c)=c\eta_1(x;a,b,c)+\eta_2(x;a,c)-\eta_2(x;b,c)$;\label{item:lem:eta2}
        \item $\int_{-\infty}^\infty |\eta_1(x;a,b,c)|\,dx \leq  2\left(\frac{1}{c}+\frac{1}{1-c}\right)|b-a|^c$.\label{item:lem:eta3}
        \item $\int_{-\infty}^\infty |\eta_2(x;a,c)|\,dx \leq |a|^{c} \Gamma(1 - c)$.\label{item:lem:eta4}
        \item $\int_{-\infty}^\infty |\eta_1'(x;a,b,c)|\,dx\leq 2\left(1+\frac{c}{1-c}\right)|b-a|^c+(|a|^c+|b|^c)\Gamma(1-c)$.\label{item:lem:eta5}
        \item \begin{align*}
        \lefteqn{\left|\int_{-\infty}^\infty \eta_1(x;a,b,c)\,dx-\frac{1}{m}\sum_{k=-\infty}^{\infty}\eta_1(k/m;a,b, c)\right|}\\
        &{\hspace{4cm}} \leq m^{-1}\left(2\left(1+\frac{c}{1-c}\right)|b-a|^c+(|a|^c+|b|^c)\Gamma(1-c)\right).   
        \end{align*}\label{item:lem:eta6}
    \end{enumerate}
\end{lemma}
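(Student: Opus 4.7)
The plan is to prove the six parts sequentially, as each builds on the previous. Throughout, the workhorse substitution is $u = e^{-x}$, which converts integrals over $\R$ in $x$ into integrals over $(0,\infty)$ in $u$ with an extra factor of $u^{-1}$ and $e^{cx} \mapsto u^{-c}$.

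\textbf{Parts 1 and 2.} For Part~\ref{item:lem:eta1}, the substitution rewrites $\int \eta_1\,dx$ as the Frullani-type integral $\int_0^\infty (e^{-au}-e^{-bu})u^{-c-1}\,du$. Integration by parts against the antiderivative $-u^{-c}/c$ kills the boundary terms (at $u\to 0$ because $e^{-au}-e^{-bu} \sim (b-a)u$, so the product vanishes like $u^{1-c}$; at $u\to\infty$ by exponential decay), reducing the integral to $\frac{1}{c}\int_0^\infty (ae^{-au}-be^{-bu})u^{-c}\,du$. Applying the Gamma identity $\int_0^\infty e^{-au}u^{-c}\,du = a^{c-1}\Gamma(1-c)$ and the reflection $\Gamma(-c) = -\Gamma(1-c)/c$ then gives $(a^c-b^c)\Gamma(-c)$. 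Part~\ref{item:lem:eta2} is a one-line application of the product and chain rules to the definition of $\eta_1$.

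\textbf{Part 3.} The key estimate is the two-sided pointwise bound $|e^{-ae^{-x}}-e^{-be^{-x}}| \leq \min(|b-a|e^{-x},\,2)$. The constant bound is the triangle inequality; the linear bound follows by parameterizing the segment from $a$ to $b$ and writing $e^{-au}-e^{-bu} = (b-a)u\int_0^1 e^{-(tb+(1-t)a)u}\,dt$, which is bounded in modulus by $|b-a|u$ using $\Re(tb+(1-t)a)\geq 0$. Splitting the $x$-integral at the crossover point $x^\star = \ln(|b-a|/2)$ and evaluating the two resulting elementary exponential integrals gives $2^{1-c}|b-a|^c(\tfrac{1}{c} + \tfrac{1}{1-c})$, which is dominated by the claimed bound since $2^{1-c}\leq 2$.

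\textbf{Parts 4, 5, and 6.} For Part~\ref{item:lem:eta4}, the substitution $u=e^{-x}$ reduces $\int|\eta_2(x;a,c)|\,dx$ to $|a|\int_0^\infty u^{-c}|e^{-au}|\,du$, which evaluates via the Gamma identity to $|a|^c\Gamma(1-c)$ (principal branch, so that $|a^c|=|a|^c$). Part~\ref{item:lem:eta5} then follows from Part~\ref{item:lem:eta2} by the triangle inequality: $|\eta_1'|\leq c|\eta_1| + |\eta_2(\cdot;a,c)| + |\eta_2(\cdot;b,c)|$, integrated and bounded via Parts~\ref{item:lem:eta3} and~\ref{item:lem:eta4}, producing the combination $2(1+\tfrac{c}{1-c})|b-a|^c + (|a|^c+|b|^c)\Gamma(1-c)$. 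Finally, Part~\ref{item:lem:eta6} is immediate from \cref{lem:sum_to_integral} applied to $h(x)=\eta_1(x;a,b,c)$, with the $L^1$-norm of $h'$ bounded by Part~\ref{item:lem:eta5}.

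\textbf{Main obstacle.} The principal difficulty is uniformity in complex $a, b$ with only $\Re(a), \Re(b)\geq 0$. The Gamma integrals in Parts~\ref{item:lem:eta1} and~\ref{item:lem:eta4} are naturally set up for $\Re(a)>0$; the boundary case $\Re(a)=0$ is handled by a continuity argument (approximate by $a+\varepsilon$ and let $\varepsilon\downarrow 0$), and one must consistently use the principal branch of $a^c$ so that $|a^c|=|a|^c$. Carefully verifying that the boundary terms in the Part~\ref{item:lem:eta1} integration by parts vanish, and tracking constants through the splitting argument of Part~\ref{item:lem:eta3} so that the bound is sharp in both the small-$c$ and $c\to 1$ regimes, are the other nontrivial checks.
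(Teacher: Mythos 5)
Your approach matches the paper's proof step for step: Parts 1 and 2 by integration by parts (you apply the $u=e^{-x}$ substitution before integrating by parts where the paper does so after, but the mechanics and boundary-term checks are identical) and differentiation; Part 3 by the two-sided pointwise bound on $|e^{-ae^{-x}}-e^{-be^{-x}}|$ and a domain split, with a marginally different split point that yields the same stated bound; Part 4 via the Gamma integral; Part 5 by the triangle inequality applied to Part 2; Part 6 by \cref{lem:sum_to_integral}.

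However, your Part 4 calculation has a genuine gap. The substitution correctly gives $|a|\int_0^\infty u^{-c}|e^{-au}|\,du$, but because the modulus sits \emph{inside} the integral we have $|e^{-au}|=e^{-\Re(a)u}$, and the Gamma identity yields
\[
|a|\int_0^\infty u^{-c}e^{-\Re(a)u}\,du=|a|(\Re a)^{c-1}\Gamma(1-c),
\]
not $|a|^c\Gamma(1-c)$. Your parenthetical about the principal branch suggests you are instead computing $|a|\cdot\bigl|\int_0^\infty u^{-c}e^{-au}\,du\bigr|=|a|^c\Gamma(1-c)$; but $\bigl|\int g\bigr|\leq\int|g|$, so that quantity \emph{lower}-bounds what you need to \emph{upper}-bound. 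Worse, the correct value sits on the wrong side of the target: since $\Re a\leq|a|$ and $c-1<0$, the map $t\mapsto t^{c-1}$ is decreasing, so $(\Re a)^{c-1}\geq|a|^{c-1}$ and hence $|a|(\Re a)^{c-1}\Gamma(1-c)\geq|a|^c\Gamma(1-c)$. So your argument cannot deliver the stated inequality, and as written Part 4 holds only when $a$ is a nonnegative real. (For what it is worth, the paper's own proof has the same defect at the same spot: it correctly reaches $|a|(\Re a)^{c-1}\Gamma(1-c)$ and then asserts this is $\leq|a|^c\Gamma(1-c)$, which requires $\Re a\geq|a|$.)
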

\begin{proof}
\underline{Part 1.} 
Integral by parts.
    \begin{align*}
        &\int_{-\infty}^\infty (e^{-ae^{-x}}-e^{-be^{-x}})e^{cx}\,dx \\
        &= \left. (e^{-ae^{-x}}-e^{-be^{-x}})c^{-1}e^{cx}\right|_{-\infty}^\infty - \int_{-\infty}^\infty (ae^{-x}e^{-ae^{-x}}-be^{-x}e^{-be^{-x}})c^{-1}e^{cx}\,dx
        \intertext{Since $\Re(a)\geq 0$, $\Re(b)\geq 0$ and $c\in (0,1)$,  $\left. (e^{-ae^{-x}}-e^{-be^{-x}})c^{-1}e^{cx}\right|_{-\infty}^\infty=0$,}
        &= -c^{-1} \int_{-\infty}^\infty ae^{-(1-c)x}e^{-ae^{-x}}\,dx + c^{-1} \int_{-\infty}^\infty be^{-(1-c)x}e^{-be^{-x}}\,dx.
    \end{align*}
    Set $z=ae^{-x}$ and we have
    \begin{align}
        c^{-1}\int_{-\infty}^\infty ae^{-(1-c)x}e^{-ae^{-x}}\,dx &=c^{-1}\int_{0}^\infty a^{c}z^{-c}e^{-z}\,dz\nonumber\\
        &= a^c \frac{\Gamma(1-c)}{c}=a^c (-\Gamma(-c)).\label{eq:acgamma}
    \end{align}
    Thus
    \begin{align*}
        \int_{-\infty}^\infty (e^{-ae^{-x}}-e^{-be^{-x}})e^{cx}\,dx&=-a^c (-\Gamma(-c))+b^c (-\Gamma(-c))=(a^c-b^c)\Gamma(-c).
    \end{align*}

\medskip
\noindent\underline{Part 2.} 
    \begin{align*}
        \eta_1'(x;a,b,c)&=(e^{-ae^{-x}}-e^{-be^{-x}})ce^{cx}+(ae^{-x}e^{-ae^{-x}}-be^{-x}e^{-be^{-x}})e^{cx}\\
        &= c\eta_1(x;a,b,c)+\eta_2(x;a,c)-\eta_2(x;b,c).
    \end{align*}

\medskip
\noindent\underline{Part 3.} 
    Define the path $\phi(t)=(1-t)a+tb$ and $g_x(r)=e^{-re^{-x}}$. By the path integral,
    \begin{align*}
        \left| e^{-ae^{-x}}-e^{-be^{-x}}\right| &=  |g_x(a)-g_x(b)|=\left|\int_0^1 g_x'(\phi(t))\phi'(t)\,dt\right|\\
        &=\left| \int_0^1 e^{-\phi(t)e^{-x}}(-e^{-x})(b-a)\,dt\right|\leq  |b-a|e^{-x}\int_0^1 |e^{-\phi(t)e^{-x}}|\,dt
        \intertext{We assumed $\Re(a),\Re(b)\geq 0$ and thus $\Re(\phi(t))\geq 0$ for any $t\in[0,1]$, which implies $|e^{-\phi(t)e^{-x}}|<1$. We then have }
        \left| e^{-ae^{-x}}-e^{-be^{-x}}\right|&\leq |b-a|e^{-x}.
    \end{align*}
    On the other hand, since $\Re(a),\Re(b)\geq 0$, we have $\left|e^{-ae^{-x}}\right|,\left|e^{-be^{-x}}\right|\leq 1$ and thus
    \begin{align*}
        \left| e^{-ae^{-x}}-e^{-be^{-x}}\right| &\leq \left|e^{-ae^{-x}}\right|+\left|e^{-ae^{-x}}\right|\leq 2.
    \end{align*}
    We conclude that for any $q\in \R$
    \begin{align*}
        \int_{-\infty}^\infty \left|(e^{-ae^{-x}}-e^{-be^{-x}})e^{cx}\right| \,dx &=\int_{-\infty}^q \left|e^{-ae^{-x}}-e^{-be^{-x}}\right|e^{cx} \,dx   +\int_{q}^{\infty} \left|e^{-ae^{-x}}-e^{-be^{-x}}\right|e^{cx} \,dx  \\
        &\leq \int_{-\infty}^q 2 e^{cx} \,dx   +\int_{q}^{\infty} |b-a|e^{-x}e^{cx}\,dx\\
        &=2 \frac{e^{cq}}{c}+|b-a|\frac{e^{(c-1)q}}{1-c},
        \intertext{since $c\in(0,1)$. Now set $q=\log |b-a|$ and we have }
        \int_{-\infty}^\infty \left|(e^{-ae^{-x}}-e^{-be^{-x}})e^{cx}\right| \,dx &\leq 2\left(\frac{1}{c}+\frac{1}{1-c}\right)|b-a|^c.
    \end{align*}

\medskip
\noindent\underline{Part 4.} 
    \begin{align*}
        \int_{-\infty}^\infty \left|a e^{-x}e^{-ae^{-x}}e^{cx}\right|\, dx &= |a|\int_{-\infty}^\infty  e^{-x}e^{-\Re({a})e^{-x}}e^{cx}\, dx
        \intertext{by \cref{eq:acgamma}, since $\Re({a})\geq 0$}
         &= |a|\frac{1}{\Re(a)} (\Re(a))^c \Gamma(1-c)\\
         &\leq |a|^c\Gamma(1-c).
    \end{align*}
\medskip
\noindent\underline{Part 5.} 
Follows from 2, 3, and 4.

\medskip
\noindent\underline{Part 6.}  Follows from 5 and \cref{lem:sum_to_integral}.
\end{proof}

\end{document}